\newtheorem{thm}{Theorem}
\newtheorem{col}{Corollary}
\begin{document}
\title{Resource Allocation for Channel Estimation in Reconfigurable Intelligent Surface-Aided Multi-Cell Networks}
	\author{Yining Xu,
          Sheng~Zhou
     \thanks{TThis work was supported in part by the National Natural Science Foundation of China under Grants 62341108, 62022049 and 62111530197.}
     \thanks{Y. Xu and S. Zhou are with Department of Electronic Engineering, Tsinghua University, Beijing 100084, China, and the the Beijing National Research Center for Information Science and Technology (e-mail: xu-yn16@mails.tsinghua.edu.cn; sheng.zhou@tsinghua.edu.cn).}
         }
\maketitle

\begin{abstract}
Reconfigurable intelligent surface (RIS) is a promising solution to deal with the blockage-sensitivity of millimeter wave band and reduce the high energy consumption caused by network densification. However, deploying large scale RISs may not bring expected performance gain due to significant channel estimation overhead and non-negligible reflected interference. In this paper, we derive the analytical expressions of the coverage probability, area spectrum efficiency (ASE) and energy efficiency (EE) of a downlink RIS-aided multi-cell network. In order to optimize the network performance, we investigate the conditions for the optimal number of training symbols of each antenna-to-antenna and antenna-to-element path (referred to as \emph{the optimal unit training overhead}) in channel estimation. Our study shows that: 1) RIS deployment is not `the more, the better', only when blockage objects are dense should one deploy more RISs; 2) the coverage probability is maximized when the unit training overhead is designed as large as possible; 3) however, the ASE-and-EE-optimal unit training overhead exists. It is a monotonically increasing function of the frame length and a monotonically decreasing function of the average signal-to-noise-ratio (in the high signal-to-noise-ratio region). Additionally, the optimal unit training overhead is smaller when communication ends deploy particularly few or many antennas.

\end{abstract}
\begin{IEEEkeywords}
Reconfigurable intelligent surface, multi-cell networks, directional transmissions, channel estimation, resource allocation, stochastic geometry.
\end{IEEEkeywords}

\IEEEpeerreviewmaketitle
\section{Introduction}
To satisfy the ever growing needs for data rate, seamless coverage and energy efficiency (EE) of wireless communication networks, new technologies emerge, including millimeter wave (mmWave), ultra-dense networks (UDNs) and massive multiple-input multiple-output (MIMO). Yet the coverage demand can not always be met since mmWave is sensitive to blockages due to its short wavelength \cite{bai2014analysis}. In addition, the high energy consumption owing to network densification is still one of the vital problems to be solved. As a promising way to circumvent the high susceptibility of mmWave to blockages and lower down the network energy consumption, reconfigurable intelligent surface (RIS) is slated to play an important role in the next-generation mobile systems \cite{Di2020reconfigurable}. An RIS is constructed with man-made surfaces of electromagnetic materials, e.g., conventional reflect-arrays, liquid crystal surfaces and software-defined meta-surfaces, and thus the flexible control of electromagnetic propagation environment can be realized \cite{Basar2019wireless}. Moreover, RISs can offer significant economic and energy advantages of low-cost. At the same time, unlike relays, RISs do not require complex processing and encoding/decoding, leading to `zero-delay' reflection \cite{wu2020towards}. 

Despite these appealing potentials, RISs can reflect and so enlarge the interference, especially under the dense deployment of RISs \cite{xu2021impact}. Although beamforming reduces the interference in the UDN, how directional transmissions behave in the RIS-aided multi-cell network and how many RISs should be deployed are to be revealed. On top of that, the acquisition of channel state information (CSI) in the RIS-aided multi-cell network becomes a serious expense, in particular when RISs are equipped with a vast number of passive elements \cite{zheng2021survey}. Allocating appropriate resources to balance the CSI accuracy and the overhead so as to promote the network performance, is of great importance.
 
\subsection{Related works}
For RIS channel estimation, several schemes have been proposed and the corresponding overhead has been estimated. An elements grouping method to reduce the training overhead is proposed in \cite{yang2020intelligent}\cite{zhang2020intelligent}, where only the combined channels of each group of elements are estimated and the overhead in terms of pilot transmission time is shown to be proportional to the number of groups. Transmit power allocation and RIS reflection coefficients are optimized to maximize the achievable rate in \cite{yang2020intelligent}. Furthermore, RIS reflection patterns to aid channel estimation are designed and a closed-form expression of the channel estimation error is derived in \cite{zhang2020intelligent}. In \cite{wang2021joint}, the authors propose a joint beam training and positioning scheme, in which random beamforming and maximum likelihood estimation are performed to acquire angle-of-arrival (AoA) and angle-of-departure (AoD). Then an iterative positioning algorithm is applied and the location information can further cross verify the estimation of AoA and AoD. The pairwise error probability of AoA/AoD is proved to be inversely proportional to the training overhead, i.e., the number of channel measurements, in \cite{wang2021joint}. Ref. \cite{zappone2021overhead} develops an overhead-aware resource allocation framework and optimizes the rate and EE w.r.t. RIS phase shifters, transmit and receive filters, as well as power and bandwidth allocation. The time and power overhead of channel estimation are proportional to the product of the number of antennas/elements in base stations (BSs), user equipments (UEs) and RISs. While the length of feedback phase depends on the number of feedback bits for each RIS element and the number of RIS elements \cite{zappone2021overhead}. Furthermore, based on \cite{zappone2021overhead}, the number of RIS elements is optimized when BSs and UEs are equipped with a single antenna \cite{zappone2021optimal}. The results show that rate increases with the number of RIS elements increases, and EE is a concave function of the number of RIS elements\cite{zappone2021optimal}. In \cite{wang2020channel}, the strong correlation of UE-RIS-BS uplink reflected channels between different UEs is utilized to reduce the channel estimation time. Total $KMN+KM$ channel coefficients can be perfectly recovered by $K+N+\max(K-1,\lceil (K-1)N/M)\rceil$ pilot symbols, where $K$, $N$ and $M$ are the number of UEs, RIS elements and BS antennas, respectively \cite{wang2020channel}. These works have effectively reduced the channel estimation overhead, nonetheless, the limited training/feedback resources and the noise/interference still result in imperfect CSI in practice \cite{zheng2021survey}. Consequently, optimizing wireless resources for training/feedback/transmission to balance the accuracy and the overhead of CSI is still of great importance. Most works focus on the performance optimization for a single cell or several adjacent cells, and thus a system-level analysis with large-scale deployed BSs and RISs is needed.

Among the works on system-level performance analysis of RIS-aided networks using stochastic geometry, ref. \cite{kishik2021exploiting} derives the expressions of the average ratio of blind-spot area and the probability distribution of path loss. The results indicate that deploying RISs notably improves the coverage performance when the interference is neglected \cite{kishik2021exploiting}. In \cite{nemati2020ris}, the signal-to-interference-ratio (SIR) coverage probability and the peak reflection power of RISs are derived in closed-forms. The results show that deploying RISs are as effective as equipping BSs with more antennas. Whereas, considering the reflection of interference by RISs, there exists an optimal density of RISs, suggesting that RIS deployment is not `the more, the better', especially when BSs and UEs are equipped with multi-antennas, as revealed in our previous work \cite{xu2021impact}. Ref. \cite{lyu2021hybrid} mainly characterizes the achievable throughput in a hybrid wireless network comprising both active BSs and passive RISs. It is demonstrated that deploying distributed RISs significantly boosts signal power but with marginal interference increases under single antenna BSs and UEs. Apart from that, with a total deployment cost constraint, the optimal RIS-to-BS density ratio that maximizes the network throughput is observed. The authors in \cite{hou2022mimo} study a single-cell scenario with multi-users, where the high signal-to-noise-ratio (SNR) slopes of ergodic rate and the diversity order of outage probability are derived, showing that increasing the number of RIS elements improves the spectrum efficiency and EE. Ref. \cite{psomas2021association} targets at comparing the performance of different UE-RIS association policies, e.g., random association, the closest association and all available association. Results show that the performance comparison heavily relies on the number of RIS elements, cell radius and blockage density. An RIS-aided multi-cell non-orthogonal multiple access (NOMA) network is investigated in \cite{zhang2022reconfigurable} and \cite{zhang2021multi}. The distribution of angle-of-incident and angle-of-reflection are studied, and the coverage probability of paired NOMA users is derived in \cite{zhang2022reconfigurable}. It is concluded that enlarging the size of RIS increases the achievable rate till an upper limit. Besides, ref. \cite{zhang2021multi} considers inter-cell interference and derives the closed-form coverage probability for paired NOMA users. The results evidence that strengthened channel quality overtakes the interference introduced by RISs. Although the performance of RIS-aided networks has been investigated in these previous works, the studies on reflected interference and its impacts on large-scale networks are still in its infancy, especially in terms of the coverage probability, area spectrum efficiency (ASE) and EE.

\subsection{Contributions}
In this paper, we study the large-scale network performance in the light of the coverage probability, the ASE and the EE in an RIS-aided multi-cell network with downlink directional transmissions. In particular, via stochastic geometry, the impact of reflected interference from RISs is analyzed. The optimal number of training symbols per path, referred to as \emph{the optimal unit training overhead}, to maximize the coverage probability, the ASE and the EE is also studied. Our main contributions are summarized as follows:
\begin{itemize}
	\item We utilize a general model of resource allocation for channel estimation and data transmission. The beam alignment error, due to limited channel estimation resources and thereby imperfect CSI, is also characterized. Particularly, the impact of additional interference, from the reflection of RISs, is modeled and studied.
    \item We derive the analytical expressions of the coverage probability, the ASE and the EE for a typical user in the network. Moreover, the monotonicity of the coverage probability w.r.t. unit training overhead in channel estimation is proved. And the condition for the ASE-optimal (also proved to be the EE-optimal) unit training overhead is derived. As a result, the performance of the network can be optimized with the proposed resource allocation scheme.
    \item We adopt extensive numerical studies and find that: 1) The optimal RIS deployment fraction, denoted as the ratio of RIS density to blockage density, is a monotonically increasing function of blockage density and a monotonically decreasing function of BS density. These results indicate that RIS deployment is not `the more, the better', only when blockage objects are dense should one deploy more RISs. 2) The optimal unit training overhead to maximize the ASE and the EE is a monotonically increasing function of the frame length and a monotonically decreasing function of the average SNR (in the high SNR region). Then, the optimal unit training overhead is significantly smaller under especially few or many antennas deployment than that under medium scale of antennas deployment. 
\end{itemize}

\subsection{Organization}
The organization of the rest is as follows. System model is introduced in Section \ref{secmodel}. The expressions of the coverage probability, the ASE and the EE, together with the condition for the optimal unit training overhead, are derived in Section \ref{secresult}. The numerical results in Section \ref{secnum} verify the theoretical results. Finally, Section \ref{seccon} concludes our work.

\section{System Model}\label{secmodel}
\subsection{Network Deployment}
\begin{figure}[t]
  \centering
  \includegraphics[scale = 0.85]{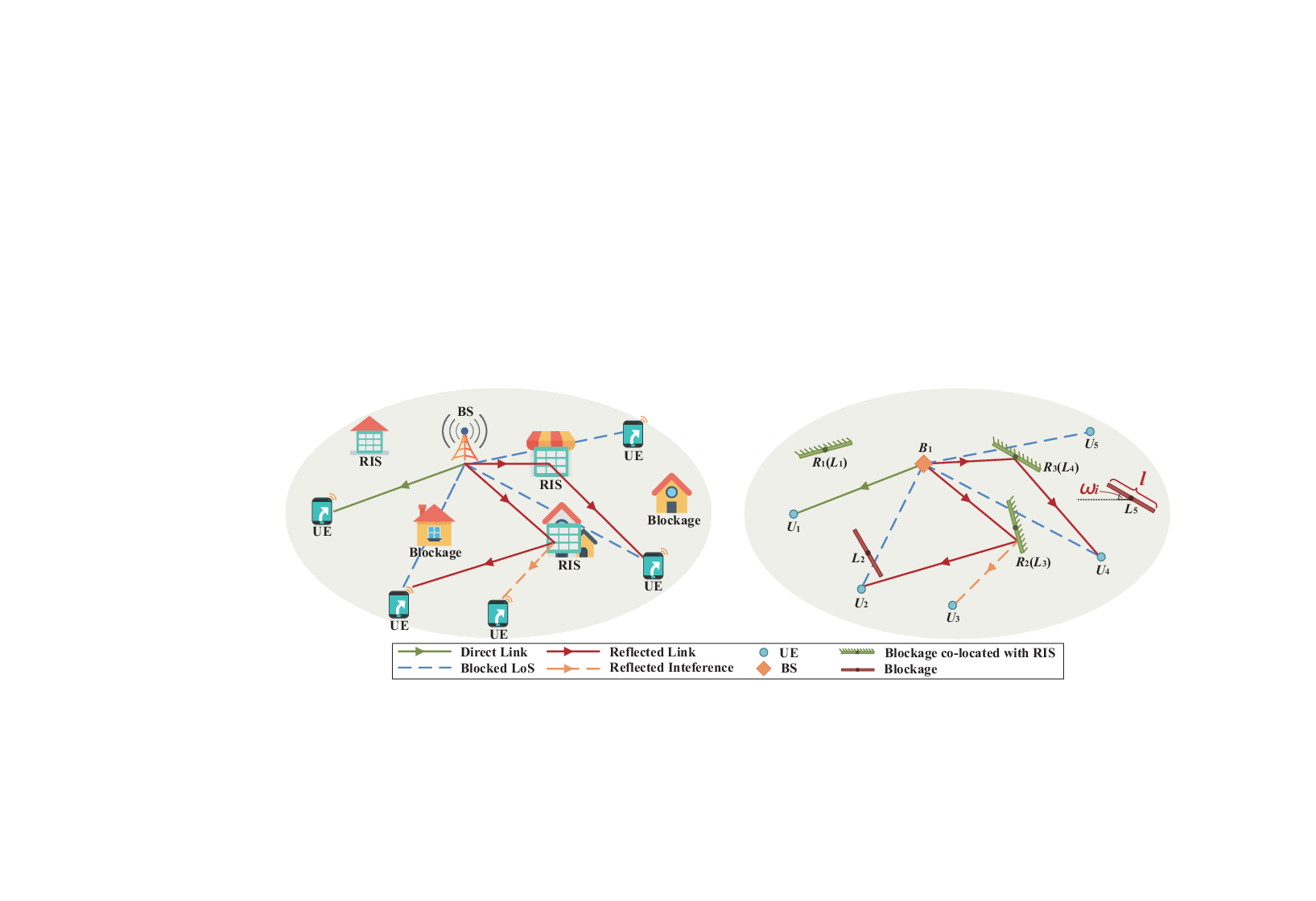}
  \setlength{\abovecaptionskip}{0pt}
  \setlength{\belowcaptionskip}{-20pt}
  \caption{An illustration of network deployment, communication links and stochastic geometry models.}
  \label{Deployment}
\end{figure}
Consider the downlink of a multi-cell network with directional transmissions. We assume that the locations of BSs and UEs follow two independent homogeneous Poisson point processes (HPPPs) $\phi_\text{B} = \{B_i\} \in \mathbb{R}_2$ and $\phi_\text{U} = \{U_i\} \in \mathbb{R}_2$ with density $\lambda_\text{B}$ and $\lambda_\text{U}$, respectively. The attributes of blockages, such as location, orientation and size, are determined by a simplified Boolean model \cite{bai2015coverage}. The blockages are assumed to be line segments with equal length $l$. The center points of the blockages follow another HPPP $\phi_\text{L}= \{L_i\} \in \mathbb{R}_2$ with density $\lambda_\text{L}$, which is independent of $\phi_\text{B}$ and $\phi_\text{U}$. The orientation of the blockage $L_i$ is an uniformly distributed random variable $\omega_i$ in the range of $[0,2\pi)$, which is independent and identically distributed (i.i.d.). We consider an RISs deployment strategy that a fraction of blockages are co-located with (a.k.a. replaced by) RISs, as shown in Fig.\ref{Deployment}. Therefore, the center points of RISs, denoted by $\phi_\text{R}= \{R_i\} \in \mathbb{R}_2$, actually follow a thinning process of $\phi_\text{L}$ with density $\lambda_\text{R} = \mu \lambda_\text{L}$, where $\mu \in [0,1]$ is the RIS deployment fraction. We assume that RISs and blockages are of the same size, and only one of the two surfaces of an RIS is the reflection surface.

\subsection{Channel Model: Blockage, Path Loss and Small-Scale Fading}
For any communication link with distance $r$, the line-of-sight (LoS) probability is $p_\text{L}(r) = \text{exp}(-\eta r)$, where $\eta = \frac{2l\lambda_\text{L}}{\pi}$  under the aforementioned Boolean blockage model \cite{bai2015coverage}. We denote the LoS indicator  as $\mathbb{I}_\text{L}(r)$, where $\mathbb{I}_\text{L}(r) = 1$ represents LoS communication link and $\mathbb{I}_\text{L}(r) = 0$ indicates the occurrence of blockage. Therefore, the distribution of $\mathbb{I}_\text{L}(r)$ is
\begin{align}\label{IL}
  \mathbb{I}_\text{L}(r)=
  \begin{cases}
    1 & \text{w.p.  } p_\text{L}(r)\\
    0 & \text{w.p.  } 1-p_\text{L}(r).
  \end{cases}
\end{align}

We consider two types of communication links, as shown in Fig.\ref{Deployment}. When the BS-UE link is LoS, i.e., the UE associates to the BS directly, the path loss of a direct link with distance $r_{\text{BU}}$ is $P\!L_\text{D}(r_{\text{BU}}) = r_{\text{BU}}^{-\alpha}$, where $\alpha$ is the path loss exponent. When the direct link is blocked, the RIS can provide an reflected link to the blocked UE. The sum-distance path loss model is used since mmWave wavelength is sufficiently small as compared with RIS element size \cite{di2020analytical}. In addition, the energy loss after the reflection of an RIS is modeled by a reflection power attenuation coefficient $\gamma$. In that case, the path loss of a reflected link, with distance $r_{\text{BR}}$ between the BS and the RIS and distance $r_{\text{RU}}$ between the RIS and the UE, is $P\!L_\text{R}(r_{\text{BR}}+ r_{\text{RU}}) = \gamma(r_{\text{BR}}+r_{\text{RU}})^{-\alpha}$, where $r_{\text{BR}}+ r_{\text{RU}}$ is the equivalent reflected link distance. 

We assume a Rayleigh small-scale fading and the channel fading coefficient is denoted as $h \sim \text{exp}(1)$. The channel fading coefficient $h$ is i.i.d. for direct links and reflected links.

\subsection{Feasibility of Reflection}
\begin{figure}[tb]
  \centering
  \includegraphics[scale = 0.47]{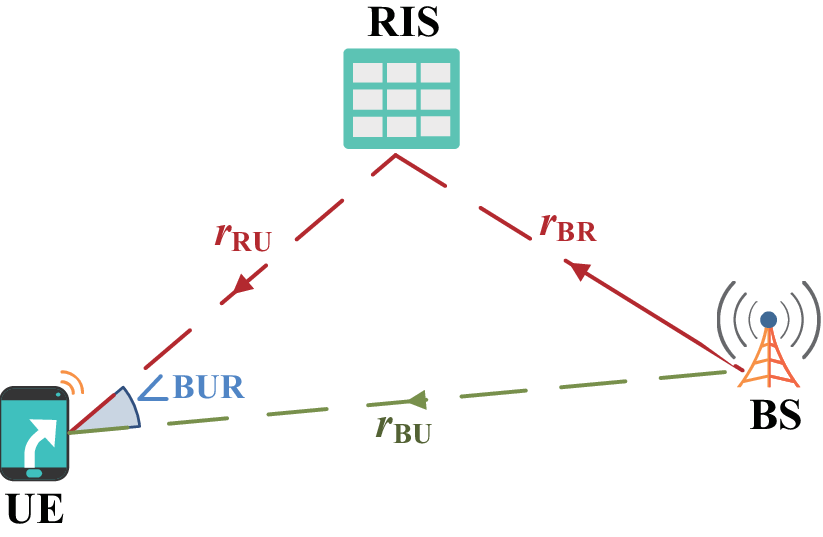}
  \setlength{\abovecaptionskip}{0pt}
  \setlength{\belowcaptionskip}{-20pt}
  \caption{An illustration of feasible reflection.}
  \label{feas}
\end{figure}
 As illustrated in Fig. \ref{feas}, the feasibility of the reflected link depends on whether the BS and the UE are on the same side of the RIS reflection surface. We denote the probability of feasible reflection as $p_\text{F}(r_{\text{BU}},r_{\text{RU}},\angle{\text{BUR}})$, where $\angle{\text{BUR}}$ is the angle between the UE-BS path and the UE-RIS path. The expression of $p_\text{F}(r_{\text{BU}},r_{\text{RU}},\angle{\text{BUR}})$ is derived in \cite{kishik2021exploiting} as
\begin{align}
   \begin{split}
  p_\text{F}(r_{\text{BU}},r_{\text{RU}},\angle{\text{BUR}}) 
  = \frac{1}{2}
     \left(\!1\!-\! \frac{1}{\pi} 
     \cos^{-1}\!\!\left(\frac{r_{\text{RU}}-r_{\text{BU}} \cos(\angle{\text{BUR}})}{\!\!\sqrt{\!r_{\text{BU}}^2 \!+ \!r_{\text{RU}}^2 \!-\! 2r_{\text{BU}}r_{\text{RU}} \cos(\angle{\text{BUR}})}}\right)\!\!\right).\\
   \end{split}
  \end{align}
  Similarly, we also denote a feasible reflection indicator $\mathbb{I}_\text{F}(r_{\text{BU}},r_{\text{RU}},\angle{\text{BUR}})$ to indicate whether the reflected link is feasible or not. The distribution of $\mathbb{I}_\text{F}(r_{\text{BU}},r_{\text{RU}},\angle{\text{BUR}})$ is given by
  \begin{align}\label{IF}
  \mathbb{I}_\text{F}(r_{\text{BU}},r_{\text{RU}},\angle{\text{BUR}})=
  \begin{cases}
    1 & \text{\!\!w.p. } p_\text{F}(r_{\text{BU}},r_{\text{RU}},\angle{\text{BUR}})\\
    0 & \text{\!\!w.p. } 1\!-\!p_\text{F}(r_{\text{BU}},r_{\text{RU}},\angle{\text{BUR}}).
  \end{cases}
\end{align}

\subsection{Beamforming Pattern}
\begin{figure}[tbh]
  \centering
  \hspace{-0.1in}
  \includegraphics[scale = 0.53]{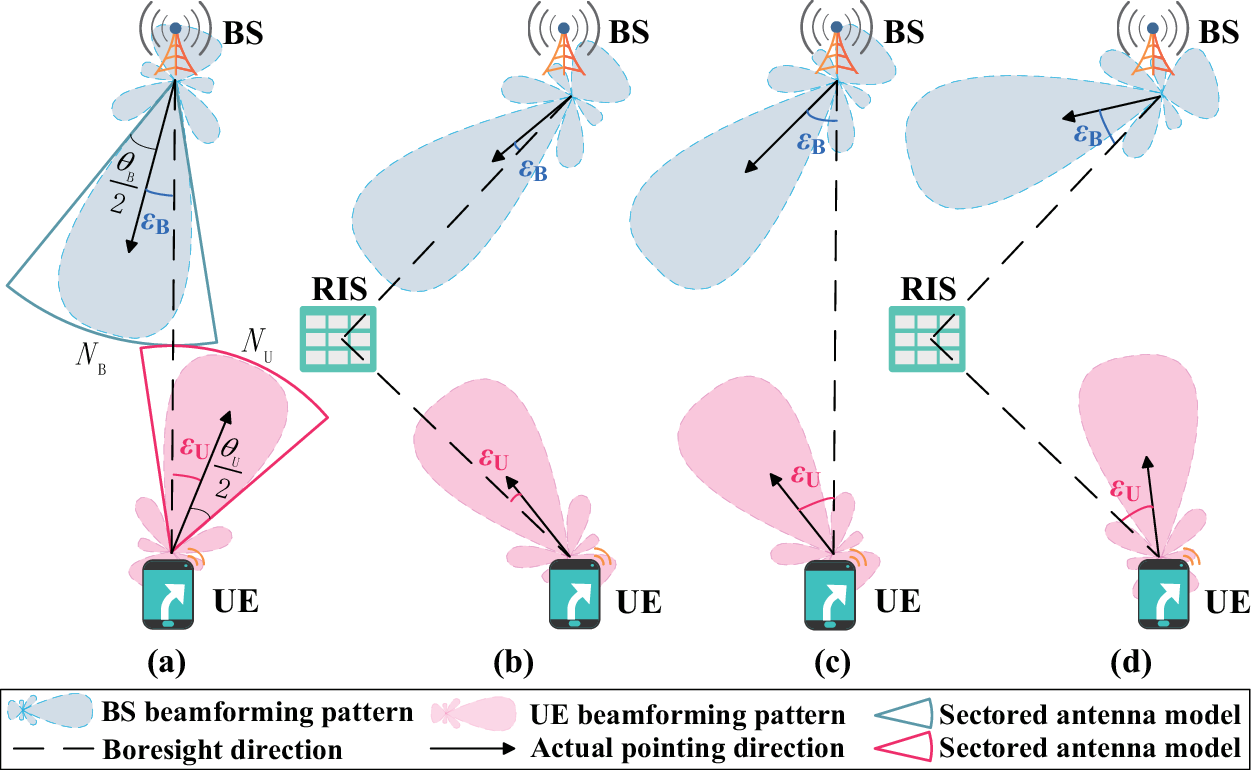}
  \setlength{\abovecaptionskip}{3pt}
  \setlength{\belowcaptionskip}{-20pt}
  \caption{An illustration of beamforming patterns and beam alignment. In (a) and (b), the beam alignment error is no larger than half beamwidth, therefore the BS and the UE are aligned through a direct link and a reflected link, respectively. In (c) and (d), the beam alignment error exceeds half beamwidth, then the BS and the UE are misaligned in both the direct link and the reflected link, respectively.}
  \label{Beam}
\end{figure}

BSs and UEs are equipped with uniform linear arrays (ULAs) with $M_\text{B}$ and $M_\text{U}$ antennas, respectively. The beamforming pattern is approximated to a sectored antenna model for tractability, as shown in Fig. \ref{Beam}. The side lobe gain can be neglected under high front-back beamforming gain ratio. As a result, the beamforming pattern is defined by two parameters: main lobe gain $N_j$ and beamwidth $\theta_j$, where $j\in\{\text{B},\text{U}\}$ represents BSs or UEs, respectively. The total radiation gain constraint $N_j \theta_j = 2 \pi$ is satisfied under varying beamforming patterns. For the ULAs with half-wavelength antenna separation, we have $\theta_j = \frac{4}{M_j}$ in radians.

We also assume that the number of reflection elements in an RIS is $M_\text{R}$. Since RISs are not equipped with power amplifier, there is no power gain after the reflection of RISs. And the constant transmit (or reflect) power consumption of a BS and an RIS are denoted as $P_\text{B}$ and $P_\text{R}$, respectively.

\subsection{User Association}\label{userasso}
Without loss of generality, we focus on a typical user located at the origin $o$. The typical user associates to the BS with the strongest average received signal power. The UEs, which associate to the same BS, are allocated with orthogonal time-frequency resource blocks (RBs). Therefore, intra-cell interference is zero. When the typical user associates to a BS $B_i$ through a direct link, the average received signal power is
\begin{align}
  S_{B_i}(r_{B_i}) = P_\text{B}\mathbb{I}_\text{L}(r_{B_i}) P\!L_\text{D}(r_{B_i}),
\end{align}
where $r_{B_i}$ is the distance between the BS $B_i$ and the typical user. While for an association BS $B_i$ with a reflected link through an RIS $R_k$, the average received signal power of the typical user is
\begin{align}
 \begin{split}
  S_{B_i\!R_k}(r_{B_i}, r_{R_k}, r_{B_i\!R_k}, \angle{B_ioR_k})
  = P_\text{B}
   \mathbb{I}_\text{L}(r_{B_i\!R_k})  
   \mathbb{I}_\text{L}(r_{R_k})  
   \mathbb{I}_\text{F}(r_{B_i},r_{R_k},\angle{B_ioR_k})   
   P\!L_\text{R}(r_{B_i\!R_k}+r_{R_k}), 
  \end{split}
\end{align}
where $r_{R_k}$ is the distance between the RIS $R_k$ and the typical user, and $\angle{B_ioR_k}$ is the angle between the direction from the origin $o$ to the BS $B_i$ and the direction from the origin $o$ to the RIS $R_k$, and $r_{B_i\!R_k} = \sqrt{r_{B_i}^2+r_{R_k}^2-2r_{B_i}r_{R_k} \cos (\angle{B_ioR_k})}$ is the distance between the BS $B_i$ and the RIS $R_k$, as illustrated in Fig. \ref{feas} with $B \rightarrow B_i$, $R \rightarrow R_k$, $U \rightarrow o$, $r_\text{BU} \rightarrow r_{B_i}$ and $r_\text{RU} \rightarrow r_{R_k}$, specifically. 

Therefore, the association BS $B^*$ is determined by
\begin{align}
 \begin{split}
  B^* = \mathop{\arg\max}_{B_i \in \phi_{\text{B}}}
        \max
        \left(\!S_{B_i}(r_{B_i}), 
            \mathop{\max}_{R_k\in \phi_{\text{R}}}
            \!\!S_{B_i\!R_k}(r_{B_i}, r_{R_k}, r_{B_i\!R_k}, \angle{B_ioR_k})
        \!\!\right).
 \end{split}  
\end{align}
This association policy attempts to find an LoS and feasible communication link with the lowest path loss. In spite of that, UEs may experience association failure when all the communication links are blocked by obstructions or infeasible for reflection. In that case, the average received signal power of the typical user is zero and a covering hole appears.

\subsection{Frame Structure and Resource Allocation}\label{framestruc}
\begin{figure}[tb]
  \centering
  \includegraphics[scale = 0.45]{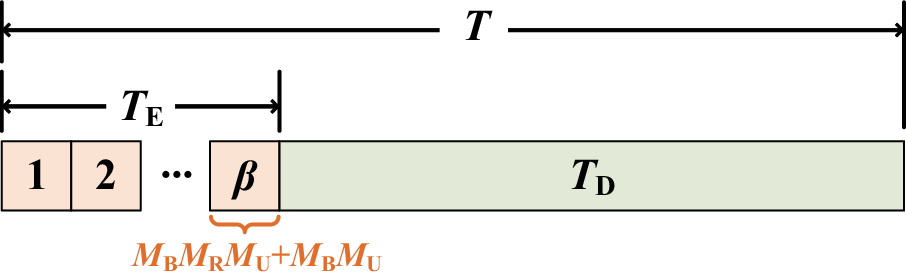}
  \setlength{\abovecaptionskip}{0pt}
  \setlength{\belowcaptionskip}{-20pt}
  \caption{Frame structure with channel estimation phase and data transmission phase.}
  \label{frame}
\end{figure}
We divide a transmission frame of fixed length $T$ into two sub-phases: the channel estimation phase of length $T_\text{E}$ and the data transmission phase of length $T_\text{D}$, as shown in Fig. \ref{frame}.

In the channel estimation phase, a BS transmits $T_\text{E}$ shared pilots to a UE and CSI is acquired at the UE through downlink training. After that, CSI is provided to the BS by feedback, and the overhead of feedback is regarded as a constant. Here we neglect the feedback overhead to simplify our analyses. Both the direct link and the reflected link of the association BS are estimated. Overall, the channel estimation overhead is modeled as
\begin{align}
  T_\text{E} = \beta(M_\text{B}M_\text{R}M_\text{U}+M_\text{B}M_\text{U}),
\end{align}
where $\beta$ is the overhead of estimating a path from a single antenna (element) to another in the corresponding communication end. The parameter $\beta$ is the number of training symbols per path, and we refer to it as the unit training overhead. Note that, $\beta$ is in the range of $[0,\frac{T}{M_\text{B}M_\text{R}M_\text{U}+M_\text{B}M_\text{U}})$. A non-integer $\beta$ can be regarded as a result of applying channel estimation overhead reduction schemes, e.g., the RIS elements grouping scheme \cite{yang2020intelligent}\cite{zhang2020intelligent}. Despite of that, the proposed channel estimation overhead model can work as a worst case analysis \cite{wang2020channel}. Furthermore, $\beta = 0$ means that we do not transmit any pilot and choose the beam direction randomly. As a result, the length of the data transmission phase is $T_\text{D} = T-T_\text{E}$.

On the other hand, the accuracy of CSI influences the precision of beam alignment \cite{park2012outage}. Under minimum mean square error (MMSE) estimation, the channel estimation error is Gaussian distributed with variance $\sigma_\text{E}^2 = \frac{1}{1+\beta \text{SNR}}$, where $\text{SNR}$ denotes the average SNR of channels \cite{caire2010multiuser}. We denote the beam alignment error $\varepsilon_j$ with $j \in \{\text{B},\text{U}\}$ as the angular difference between the boresight direction and the actual pointing direction of the BS and the UE, respectively, as shown in Fig. \ref{Beam}. Assume that the beam alignment error is a truncated Gaussian distributed random variable with zero mean \cite{cheng2018coverage}\cite{thornburg2015ergodic}, i.e., $\varepsilon_j \sim \mathcal{N}_\text{T}(0, \sigma_j^2,-\pi,\pi)$, where $\sigma_j^2$ is the variance and the variable is within the range of $[-\pi,\pi]$. A natural idea is to tie the channel estimation error and the beam alignment error together. Consequently, we assume that the variance of the beam alignment error $\sigma_j^2$ and the variance of the channel estimation error $\sigma_\text{E}^2$ satisfy the function that $\sigma_j^2 = k_j \pi^2 \sigma_\text{E}^2$ with $k_j \in (0,1]$, where $k_j \pi^2$ is the variance of the beam alignment error without channel estimation. We can revisit this function from two aspects: 1) when $\beta \rightarrow 0$, meaning that we almost choose the beam direction randomly without channel estimation, therefore the variance of the beam alignment error $\sigma_j^2$ approaches $k_j^2 \pi^2$; 2) when $\beta \rightarrow \infty$, meaning that we use enough pilots to estimate the channel, therefore the variance of the beam alignment error $\sigma_j^2$ approaches $0$ and perfect beam alignment is almost reached.

In summary, a larger $\beta$ results in more accurate CSI and a smaller beam alignment error, but the time left for data transmission is reduced simultaneously. How to allocate the resources (either over time or bandwidth) for channel estimation and data transmission is our focus in this paper.

\subsection{Performance Metrics}\label{performmetric}
First, we will discuss the distribution of the total beamforming gain. We denote the probability that the beam alignment error is no larger than half beamwidth as $p_{\text{E}_j}(\sigma_j, \theta_j)$ with $j \in\{\text{B},\text{U}\}$. The expression of $p_{\text{E}_j}(\sigma_j, \theta_j)$ is obtained according to the truncated Gaussian distribution of $\varepsilon_j$, which is 
\begin{align}\label{PE}
p_{\text{E}_j}(\sigma_j, \theta_j) = \mathbb{P}\left[|\varepsilon_j| \leq \frac{\theta_j}{2}\right]
                                   = \frac{\text{erf}\left(\frac{\theta_j}{2 \sqrt{2 \sigma_j^2}}\right)}
                                   {\text{erf}\left(\frac{\pi} {\sqrt{2 \sigma_j^2}}\right)}.
\end{align}
When the beam alignment error is no larger than half beamwidth, the main lobe gain can still be obtained; otherwise, the corresponding communication end only receives zero gain, i.e., the side lobe gain, named misalignment. Fig. \ref{Beam} illustrates the situations of beam alignment and misalignment. Then, the distribution of the total beamforming gain of the link between the typical user and its association BS is expressed as
\begin{align}\label{G0}
G_0 = 
\begin{cases}
    N_\text{B} N_\text{U} & \text{w.p.  } p_{\text{E}_\text{B}}(\sigma_\text{B}, \theta_\text{B})
                                          p_{\text{E}_\text{U}}(\sigma_\text{U}, \theta_\text{U})\\
    0 & \text{w.p.  } 1-p_{\text{E}_\text{B}}(\sigma_\text{B}, \theta_\text{B}) 
                      p_{\text{E}_\text{U}}(\sigma_\text{U}, \theta_\text{U}).
  \end{cases}
\end{align}

However, it is not the case for links between the typical user and interfering BSs. The statistics of interference is invariant with beam misalignment \cite{thornburg2015ergodic}. Consequently, the distribution of the total beamforming gain of interfering links is actually the distribution without beam alignment error, which is
\begin{align}\label{Gi}
G_i = 
\begin{cases}
    N_\text{B} N_\text{U} & \text{w.p.  } \frac{\theta_\text{B}\theta_\text{U}}{4\pi^2}\\
    0 & \text{w.p.  } 1-\frac{\theta_\text{B}\theta_\text{U}}{4\pi^2},
  \end{cases}
\end{align}
and $G_i$ is i.i.d. for each interfering BS $B_i$.

Finally, the expression of received signal-to-interference-and-noise-ratio (SINR) of the typical user is
  \begin{align}
    \begin{split}
    \text{SINR} = \frac{G_0 h 
                    \left[
                    \mathbb{I}_\text{L}(r_{\!B^*}\!) P\!L_\text{D}(r_{B^*}\!)
                     +\! 
                    (1\!-\!\mathbb{I}_\text{L}(r_{\!B^*}\!)\!) 
                    P\!L_\text{R}(r_{B^*\!R^*}\!+\!r_{R^*}) 
                    \right]}
                     {N_0 + I_\text{D} +I_\text{R}},
    \end{split}
  \end{align}
where $N_0$ is the noise power normalized by BS transmits power. And $R^*$ denotes the association RIS when the association link is a reflected link, obtaining from
\begin{align}
    \begin{split}
    R^* = \mathop{\arg \max}_{R_k\in \phi_{\text{R}}}
            S_{B^*\!\!R_k}(r_{B^*}, r_{R_k}, r_{B^*\!\!R_k}, \angle{B^*\!oR_k}).
    \end{split}
  \end{align}
While $I_\text{D}$ and $I_\text{R}$ are normalized direct interference power and normalized reflected interference power, respectively. Their expressions are listed as follows
\begin{align}
&I_\text{D} = \mathop{\sum}_{i: B_i \in \phi_\text{B} \backslash\{B^*\!\}} 
              \!\!\!S_{B_i}(r_{B_i}) G_i h /P_\text{B},\\
&I_\text{R} = \mathop{\sum}_{i: B_i \in \phi_\text{B} \!\backslash\!\{B^*\!\}} 
              \mathop{\sum}_{k: R_k \in \phi_\text{R}} 
              \!\!\!S_{B_i\!R_k}(r_{B_i}, r_{R_k}, r_{B_i\!R_k}, \angle{B_ioR_k}) G_i h /P_\text{B}.
\end{align}

There are three main performance metrics:
\begin{enumerate}
\item  Coverage probability, defined as $\mathcal{P}= \mathbb{P}[\text{SINR}>\tau]$, is the probability that the received SINR of the typical user is larger than the threshold $\tau$. This metric indicates the ability of providing the basic coverage service. 
\item  ASE, defined as $\mathcal{A} = \frac{T_\text{D}}{T} \lambda_\text{B} \mathbb{E}[\text{log}_2 (1+\text{SINR}) \mathbb{I}\{\text{SINR}>\tau\}]$ with indication function $\mathbb{I}\{\cdot\}$, is the average throughput per unit frequency and area. This metric measures how well the network utilizes bandwidth resources.
\item  EE, defined as $\mathcal{E} = \mathcal{A}/(\lambda_\text{B} P_\text{B}+ \lambda_\text{R} P_\text{R})$, is the throughput per unit frequency and power. EE represents how well the network utilizes bandwidth and energy resources. The power consumption of UEs are neglected since it is sufficiently small comparing with the power consumption of BSs and RISs.
\end{enumerate}
 
\section{Main Results}\label{secresult}
Before the derivation of the coverage probability, the ASE and the EE, we first present some preliminary knowledge on link distance distributions and association probability. Detailed proof can be found in \cite{kishik2021exploiting}.

\subsection{Link Distance Distribution}
The distance of the shortest LoS direct link is denoted by a random variable $r_\text{D}$ and its cumulative distribution function (CDF) is $F_{\text{D}}(x)$. The conditional CDF of the distance of the shortest LoS reflected link between the typical user and the BS at distance $u$ is denoted by $F_{\text{R}|u}(x)$ with $x \geq u$. The distance of the shortest LoS reflected link is denoted by a random variable $r_\text{R}$ and its CDF is $F_\text{R}(x)$. 

\subsubsection{The Distribution of the Shortest LoS Direct Link Distance}
We have following expressions
\begin{align}
 \begin{split}
F_\text{D}(x) = & \mathbb{P}[r_\text{D} \leq x]\\
              = & 1-\mathbb{P}[\mathcal{N}_{\phi_\text{B,L}}(\mathcal{B}(o,x))=0] \\
              =& 1- \exp\left(-\int_{\mathcal{B}(o,x)} \lambda_\text{B} p_\text{L}(r)  \text{d} r\right)\\
              =& 1-\exp \left( -2\pi \lambda_\text{B} \frac{1-(\eta x +1) \exp(-\eta x)}{\eta^2}\right), 
 \end{split}
\end{align}
where $\mathcal{N}_\phi(\mathcal{C})$ represents the number of points in the region $\mathcal{C}$ following the PPP $\phi$. And $\phi_\text{B,L}$ is a location-dependent thinning process of $\phi_\text{B}$ with density $\lambda_\text{B} p_\text{L}(r)$, representing an inhomogeneous PPP of the locations of LoS BSs. And $\mathcal{B}(o,x)$ is a disc centered at the origin $o$ with radius $x$. Finally, the corresponding probability density function (PDF) is expressed as
\begin{align}\label{PDF_D}
 \begin{split}
f_\text{D}(x) = \frac{\text{d} F_\text{D}(x)}{\text{d} x}
              =  2\pi \lambda_\text{B}x \exp\left(-\eta x- 2\pi \lambda_\text{B} 
                  \frac{1-(\eta x +1) \exp(-\eta x)}{\eta^2}\right).
 \end{split}
\end{align}

\subsubsection{The Conditional Distribution of the Shortest LoS Reflected Link Distance}
Similar to the derivation of $F_\text{D}(x)$, we can write the expression of $F_{\text{R}|u}(x)$ as
\begin{align}
 \begin{split}
F_{\text{R}|u}(x)
= &  \mathbb{P}[r_\text{R} \leq x|u]\\
= &1-\mathbb{P}[\mathcal{N}_{\phi_\text{R,L}}(\mathcal{C}_1)=0]\\
= & 1 \!-\! \exp \!\left(\!\!-\!\!\int_{\mathcal{C}_1} \!\!\lambda_\text{R} p_\text{L}\!(t)p_\text{L}\!\!\left(\!\!\sqrt{u^2\!+\!t^2\!-\!\!2 u t\cos\psi}\right) \!p_\text{F}(u,t,\psi) t \text{d} t \text{d} \psi\!\!\right)\\
\overset{(a)}= & 1 \!-\! \exp \!\Bigg(\!\!-\lambda_\text{R} \!\!\int_{\!-\!\pi}^{\pi} \!\!\int_0^{\frac{x^2-u^2}{2(\!x\!-\!u\!\cos\!\psi\!)}} 
 \!p_\text{L}\!(t)p_\text{L}\!\!\left(\!\!\sqrt{u^2\!+\!t^2\!-\!\!2 u t\cos\psi}\right) \!p_\text{F}(u,t,\psi) t \text{d} t \text{d} \psi\!\!\Bigg), 
  \end{split}
\end{align}
where $\phi_\text{R,L}$ is an inhomogeneous PPP representing the locations of RISs that provide LoS feasible reflected links from the BS at $u$ to the typical user. The density of $\phi_\text{R,L}$ is expressed as $\lambda_\text{R}p_\text{L}(t)p_\text{L}\!\left(\!\!\sqrt{\!u^2\!\!+\!t^2\!-\!2 u t\!\cos\!\psi}\right)p_\text{F}(u,\!t,\!\psi)$, where $t$ and $\psi$ indicate $r_\text{RU}$ and $\angle{\text{BUR}}$ for simplicity. The points in the region $\mathcal{C}_1$ satisfy that the BS-RIS-UE reflected link distance is no larger than $x$, i.e., $\mathcal{C}_1 = \{ t, \psi : t + \sqrt{u^2\!+\!t^2\!-\!\!2 u t\cos\psi} \leq x\}$, resulting in the range of the integral in (a). 

\subsubsection{The Distribution of the Shortest LoS Reflected Link Distance}
We denote that the locations of the BSs, which are blocked in direct links and have feasible LoS reflected links with link distance no larger than $x$, follow an inhomogeneous PPP $\phi_\text{B,R}$. The density of $\phi_\text{B,R}$ is $\lambda_\text{B}(1-p_\text{L}(u))F_{\text{R}|u}(x)$. Therefore, the CDF of the shortest reflected link distance $r_\text{R}$ is
\begin{align}
 \begin{split}
  F_\text{R}(x)
=  \mathbb{P}[r_\text{R} \!\leq x]
=  1-\mathbb{P}[\mathcal{N}_{\phi_\text{B,R}}(\mathbb{R}^2)=0]
=  1 \!-\! \exp \!\left(\!\!-2\pi\lambda_\text{B}\!\!\int_0^{x}  (1-p_\text{L}\!(u)) F_{\text{R}|u}(x) u \text{d} u\!\!\right).
  \end{split}
\end{align}
And the corresponding PDF is
\begin{align}\label{PDF_R}
 \begin{split}
f_\text{R}(x) 
= \frac{\text{d} F_\text{R}(x)}{\text{d} x}
=  2\pi \lambda_\text{B}
    \exp \!\!\left(
    \!\!- 2\pi  \lambda_\text{B}
    \!\!\int_0^x  \!\!\!(1 \!-\! p_\text{L}(u)\!) F_{\text{R}|u}(x) u \text{d} u \!\!\right) 
    \!\! \int_0^x \!(1\!-\!p_\text{L}(u)\!) \frac{\text{d} F_{\text{R}|u}(x) }{\text{d} x} u \text{d} u,
  \end{split}
\end{align}
where

 \begin{align}
  \begin{split}
\frac{\text{d} F_{\text{R}|u}(x) }{\text{d} x} 
    =& \lambda_\text{R} \text{exp}\!\left(\!-\!\lambda_\text{R} \!\!\!\int_{\!-\!\pi}^{\pi}
    \!\int_0^{\frac{x^2-u^2}{2 \left(\!x\!-\!u \!\cos \!\psi\! \right)}}
    \!\!p_\text{L}\!(t)p_\text{L}\!\!\left(\!\!\sqrt{u^2\!\!+\!t^2\!\!-\!\!2 u t\cos\psi}\right) \!p_\text{F}(u,t,\psi) t \text{d} t \text{d} \psi\!\!\right)\!\cdot \\
    & \quad\!\!\!\!\!\int_{\!-\!\pi}^{\pi} \!\!\!
    \!p_\text{L}\!\!\left(\!\!\frac{x^2-u^2}{2 \!\left(\!x\!\!-\!u \!\cos \!\psi\right)}\!\!\right)
    \!\!p_\text{L}\!\!\!\left(\!\!\sqrt{u^2\!\!+\!\!{\left(\!\frac{x^2-u^2}{2 \!\left(\!x\!-\!u \!\cos \!\psi\! \right)}\!\right)}^{\!\!2}\!\!\!-\!\!\frac{u(x^2\!-\!u^2)}{\!x\!-\!u \!\cos \!\psi\! }\!\cos\psi}\right)
    \!\!p_\text{F}\!\!\left(\!\!u,\!\frac{x^2-u^2}{2 \!\left(\!x\!-\!u \!\cos \!\psi\! \right)},\psi\!\!\right) \!\cdot\\
    &\quad \quad \quad \!\!\frac{(x^2\!\!-\!2 u x \cos \psi \!+\! u^2)(\!x^2\!-\!u^2)}{4(x-u\cos\psi)^3} \text{d} \psi.
   \end{split}
 \end{align}

\subsection{Association Probability}
According to our user association policy introduced in Section \ref{userasso}, the typical user may have three kinds of association states:
\begin{enumerate}
\item The typical user associates to an LoS BS through a direct link, denoted as event $\mathcal{D}$; 
\item The typical user associates to a BS through an RIS reflected link, denoted as event $\mathcal{R}$;
\item The typical user has no BS to associate, denoted as event $\mathcal{O}$.
\end{enumerate}
The probability of these three states are expressed as $\mathbb{P}_\mathcal{D}$, $\mathbb{P}_\mathcal{R}$ and $\mathbb{P}_\mathcal{O}$, respectively. Naturally, $\mathbb{P}_\mathcal{D} + \mathbb{P}_\mathcal{R} + \mathbb{P}_\mathcal{O} = 1$ should be satisfied. 

First, the expression of $\mathbb{P}_\mathcal{O}$ is obtained from
\begin{align}
 \begin{split}
 \mathbb{P}_\mathcal{O} 
 \overset{\text{(a)}}=
  &\mathbb{E}\left[ \prod_{B_i \in \phi_\text{B}}
  \!\!\!\left(\!1\!-\!\left(\mathbb{I}_\text{L}(\!r_{B_i}\!) + (1-\mathbb{I}_\text{L}(\!r_{B_i}\!)) 
  \!\left(\!\!1\!-\!\!\!\! \!\prod_{R_k \in \phi_{\text{R}}} 
  \!\!\!\!\left(1\!-\!\mathbb{I}_\text{L}(\!r_{B_i\!R_k}\!)  
  \mathbb{I}_\text{L}(\!r_{R_k}\!)  
  \mathbb{I}_\text{F}(\!r_{B_i},r_{R_k},\angle{B_ioR_k}\!)\!\right)\!\!\right)
  \!\!\!\right)\!\!\!\right)\!\!\right]\\
 \overset{\text{(b)}}=
  &\mathbb{E}_{\phi_{\text{B}}}\!\!
  \left[ \prod_{B_i \in \phi_\text{B}}
  \!\!\!\left(\!1\!-\!\left(\!\!p_\text{L}(\!r_{B_i}\!) \!+\! (1 \!-\! p_\text{L}(\!r_{B_i}\!)\!) 
  \mathbb{E}_{\phi_{\text{R}}}\!\!
  \left[\!1\!-\!\!\!\! \!\prod_{R_k \in \phi_{\text{R}}} 
  \!\!\!\!\left(1\!-\!p_\text{L}(\!r_{B_i\!R_k}\!)  
  p_\text{L}(\!r_{R_k}\!)  
  p_\text{F}(\!r_{B_i},r_{R_k},\angle{B_ioR_k}\!)\!\right)\!\right]
  \!\right)\!\!\!\right)\!\!\right]\\
 \overset{\text{(c)}}= 
  &\text{exp} \!\Bigg(\!\!\!-\! 2 \pi \lambda_\text{B} 
  \!\!\int_0^{\infty} 
  \!\!\Bigg(\!p_\text{L}\!(u) + (1 \!-\! p_\text{L}\!(u)\!) \cdot \Bigg.\Bigg.\\
  &\Bigg. \Bigg.\quad \quad \!\!\Bigg(\!1-\exp\!\Bigg(\!-\!\lambda_\text{R} 
  \!\!\int_{-\pi}^{\pi} \!\int_0^{\infty} 
  \!\!p_\text{L}\!\left(\!\sqrt{t^2 \!+\! u^2 \!-\! 2 u t \cos \psi}\right) 
  \!p_\text{L}\!(t) p_\text{F}(u,t,\psi) 
  t \text{d}t \text{d} \psi \!\Bigg)\!\!\Bigg)\!\!\Bigg)
  \!u \text{d} u\!\Bigg),
 \end{split}
\end{align}
where (a) results from the fact that neither a direct link nor a reflected link is available for the typical user in a covering hole. Equation (b) results from the independent blockages for each link and the distributions of $\mathbb{I}_\text{L}(r)$ and $\mathbb{I}_\text{F}(r_{\text{BU}},r_{\text{RU}},\angle{\text{BUR}})$ proposed in (\ref{IL}) and (\ref{IF}), respectively. Equation (c) follows by the application of probability generating functional (PGFL) of HPPP with variables substitution.

Then, we turn to the expression of $\mathbb{P}_\mathcal{D}$. The typical user associates to a BS through a direct link, that is to say, the association link is the nearest LoS direct link and any other reflected link has higher path loss. Conditioned on that the nearest LoS BS locates at distance $r_\text{D}$, the locations of BSs form an inhomogeneous PPP $\phi_{\text{B,R}|r_\text{D}}$, where the BSs are blocked in direct links and provide reflected links with stronger average received signal power than the nearest LoS BS. Therefore, the density of $\phi_{\text{B,R}|r_\text{D}}$ is $\lambda_\text{B} (1-p_\text{L}(u)) F_{\text{R}|u}(r_\text{D}\gamma^{\frac{1}{a}})$.
According to the definition of $\mathbb{P}_\mathcal{D}$, we have that
\begin{align}\label{XD}
 \begin{split}
\mathbb{P}_\mathcal{D}
=& \mathbb{E}_{r_\text{D}}\!\Big[\mathbb{P}[\mathcal{N}_{\phi_{\text{B,R}|r_\text{D}}}(\mathbb{R}^2)=0|r_\text{D}]\Big]\\
=& \mathbb{E}_{r_\text{D}}\!\!\left[\exp \!\left(\!-2\pi\lambda_\text{B}
   \!\!\int_0^{\infty} \!\!\!(1-p_\text{L}(u)) F_{\text{R}|u}(r_\text{D}\gamma^{\frac{1}{a}})
   u \text{d} u\!\right)\!\right]\\
=& \!\!\int_0^{\infty} \!\!\!\!f_\text{D}(x) 
   \exp \!\left(\!-2\pi\lambda_\text{B}
   \!\!\int_0^{\infty} \!\!\!(1-p_\text{L}(u)) F_{\text{R}|u}(x\gamma^{\frac{1}{a}})
   u \text{d} u \!\right)
   \!\text{d} x.
 \end{split}
\end{align}
Eventually, the probability of reflected links association is obtained
\begin{align}\label{XR}
\mathbb{P}_\mathcal{R} = 1-\mathbb{P}_\mathcal{O}-\mathbb{P}_\mathcal{D}. 
\end{align}

In the next step, we derive the expressions of the coverage probability, the ASE and the EE based on these preliminary knowledge on link distance distributions and association probability.

\subsection{Coverage Probability}
\begin{figure*}[t]
\hrule height 1pt
\vspace*{2pt}
\begin{thm}\label{thm1}
  The coverage probability of the downlink RIS-aided multi-cell network with directional transmissions is
 \setlength{\abovedisplayskip}{2pt}
 \setlength{\belowdisplayskip}{2pt}
\begin{small}
 \begin{align}
   \begin{split}
    \mathcal{P} = &p_{\emph{E}_\emph{B}}\!(\sigma_\emph{B}, \theta_\emph{B})
                  p_{\emph{E}_\emph{U}}\!(\sigma_\emph{U}, \theta_\emph{U})
                  \!\!\left[\mathbb{P}_\mathcal{D}
                  \!\!\!\int_0^{\infty} 
                  \!\!\!\!\!\exp\!\left(\!\!-\frac{ \tau N_0}{N_\emph{B} N_\emph{U} P\!L_\emph{D}(x)}\!\!\right)
                  \!\mathcal{L}_{I_\emph{D}}^\emph{d}\!\!\left(\!\!-\frac{ \tau}{N_\emph{B} N_\emph{U} P\!L_\emph{D}(x)}\!\!\right)
                  \!\mathcal{L}_{I_\emph{R}}^\emph{d}\!\!\left(\!\!-\frac{ \tau}{N_\emph{B} N_\emph{U} P\!L_\emph{D}(x)}\!\!\right)
                  \!\!f_\emph{D}(x) \emph{d} x \right.\\
                  & \left.+
                  \mathbb{P}_\mathcal{R}
                  \!\!\int_0^{\infty} 
                  \!\!\!\!\exp\!\left(\!\!-\frac{\tau N_0}{N_\emph{B} N_\emph{U} P\!L_\emph{R}(x)}\!\right)
                  \!\mathcal{L}_{I_\emph{D}}^\emph{r}\!\!\left(\!\!-\frac{\tau}{N_\emph{B} N_\emph{U} P\!L_\emph{R}(x)}\!\right)
                  \!\mathcal{L}_{I_\emph{R}}^\emph{r}\!\!\left(\!\!-\frac{\tau}{N_\emph{B} N_\emph{U} P\!L_\emph{R}(x)}\!\right)
                  \!\!f_\emph{R}(x) \emph{d} x
                  \right],
   \end{split}
 \end{align}
\end{small}%
  where the functions $\mathcal{L}_{I_\emph{D}}^\emph{d}(\cdot)$, $\mathcal{L}_{I_\emph{R}}^\emph{d}(\cdot)$, $\mathcal{L}_{I_\emph{D}}^\emph{r}(\cdot)$ and $\mathcal{L}_{I_\emph{R}}^\emph{r}(\cdot)$ are given by
\begin{small} 
 \begin{align}
  \begin{split}
\mathcal{L}_{I_\emph{D}}^\emph{d}\!\!\left(\cdot\right) = 
   \exp \!\left(
   \!\!-\frac{\lambda_\emph{B} \theta_\emph{B}\theta_\emph{U}}{2\pi}
   \!\!\int_{x}^{\infty}
   \!\!\!\frac{\tau  P\!L_\emph{D\!}(u)  p_\emph{L\!}(u) }
   {P\!L_\emph{D\!}(r) + \tau P\!L_\emph{D\!}(u)}
   u \emph{d} u
   \!\!\right),
  \end{split}
 \end{align}
\end{small}%
 \setlength{\abovedisplayskip}{0pt}
\begin{small} 
 \begin{align}
  \begin{split}
    &\mathcal{L}_{I_\emph{R}}^\emph{d}\!\!\left(\cdot\right) = \\
        &\exp \!\!\Bigg(
        \!\!\!-\!2 \pi \lambda_\emph{B}
        \!\!\!\int_0^{\infty}
        \!\!\!\Bigg(\!\!1\!-\! \exp \!\!\Big(
        \!\!\!-\!  \frac{\lambda_\emph{R}\theta_\emph{B}\theta_\emph{U}}{4\pi^2}
        \!\!\!\int_{\mathcal{C}_2}
        \!\!\!\!\frac{\tau \!P\!L_\emph{R\!}(\!\sqrt{\!u^2\!+\!t^2\!-\!2 u t \!\cos \!\psi}\!+\!t)
        p_\emph{L\!}(\!\sqrt{\!u^2\!+\!t^2\!-\!2 u t \!\cos \!\psi}) p_\emph{L\!}(t) 
        p_\emph{F}(u,\!t,\!\psi)}
        {P\!L_\emph{D\!}(x)+\tau P\!L_\emph{R\!}(\sqrt{\!u^2+t^2-2 u t \cos \psi}\!+\!t)}
        t \emph{d} t \emph{d} \psi
        \!\Big)\!\!\!\Bigg)
        \!u \emph{d} u
        \!\!\Bigg),                
  \end{split}
 \end{align}
\end{small}%
\setlength{\abovedisplayskip}{0pt}
\begin{small}
 \begin{align}
  \begin{split}
  \mathcal{L}_{I_\emph{D}}^\emph{r}\!\!\left(\cdot\right) =
      \exp \!\left(
      \!\!-\frac{\lambda_\emph{B} \theta_\emph{B}\theta_\emph{U}}{2\pi}
      \!\!\int_{x \gamma^{-\frac{1}{\alpha}}}^{\infty}
      \!\frac{\tau  P\!L_\emph{D\!}(u)  p_\emph{L\!}(u) }
      {P\!L_\emph{R\!}(x) + \tau P\!L_\emph{D}(u)}
      u \emph{d} u
      \!\!\right),                      
  \end{split}
 \end{align}
\end{small}%
\setlength{\abovedisplayskip}{0pt}
\begin{small}
\begin{align}
  \begin{split}
  &\mathcal{L}_{I_\emph{R}}^\emph{r}\!\!\left(\cdot\right) =\\
   &\!\exp \!\!\Bigg(
   \!\!\!-\!2 \pi \lambda_\emph{B}
   \!\!\!\int_0^{\infty}
   \!\!\!\Bigg(\!\!1\!-\! \exp \!\!\Big(
   \!\!\!-\!  \frac{\lambda_\emph{R}\theta_\emph{B}\theta_\emph{U}}{4\pi^2}
   \!\!\!\int_{\mathcal{C}_3} 
   \!\!\!\!\frac{\tau \!P\!L_\emph{R\!}(\!\sqrt{\!u^2\!+\!t^2\!-\!2 u t \!\cos \!\psi}\!+\!t)
   p_\emph{L\!}(\!\sqrt{\!u^2\!+\!t^2\!-\!2 u t \!\cos \!\psi}) 
   p_\emph{L\!}(t) 
   p_\emph{F}(u,\!t,\!\psi)}
   {P\!L_\emph{R\!}(x)+\tau P\!L_\emph{R\!}(\sqrt{\!u^2+t^2-2 u t \cos \psi}\!+\!t)}
   t \emph{d} t \emph{d} \psi
   \Big)\!\!\!
   \Bigg)
   \!u \emph{d} u
   \!\!\Bigg),
  \end{split}
\end{align}
\end{small}%
 where the regions $\mathcal{C}_2$ and $\mathcal{C}_3$ consist of two sub-regions $\mathcal{C}_{21}$ and $\mathcal{C}_{22}$, $\mathcal{C}_{31}$ and $\mathcal{C}_{32}$, respectively.
\begin{small} 
 \begin{align}
  \begin{split}
 \mathcal{C}_{21} 
 \!=\! &\left\{ 
   \!t, \psi \!:  
   t \!\in\!\! 
   \left(\!\!\max \!\left\{\!\frac{u^2 - (r_\text{D}\gamma^{\frac{1}{\alpha}})^2}
   {2(u \cos \psi \!-\! r_\text{D}\gamma^{\frac{1}{\alpha}}\!)}, 0 \!\right\}\!,\infty \!\!\right)\!\!,
   \psi \!\in\!\! 
   \left(\!\!-\pi, -\!\arccos\!\!\left(\!\!\frac{r_\text{D}\gamma^{\frac{1}{\alpha}}}{u} \!\!\right)\!\!\!\right) 
   \!\cup\! 
   \left(\!\!\arccos\!\!\left(\!\!\frac{r_\text{D}\gamma^{\frac{1}{\alpha}}}{u}\!\!\right)\!\!, 
   \pi\!\right)\!
   \!\right\},
  \end{split}
 \end{align}
\end{small}%
\setlength{\abovedisplayskip}{0pt}
\begin{small}
 \begin{align}
  \begin{split}
 \mathcal{C}_{22} 
 \!=\! &\left\{ 
   \!t, \psi \!:  
   t \!\in\!\! 
   \left(0,\frac{u^2 - (r_\text{D}\gamma^{\frac{1}{\alpha}})^2}
   {2(u \cos \psi \!-\! r_\text{D}\gamma^{\frac{1}{\alpha}}\!)}\!\right)\!\!,
   \psi \!\in\!\! 
   \left(\!-\arccos\!\left(\!\frac{r_\text{D}\gamma^{\frac{1}{\alpha}}}{u} \!\right), \arccos\!\left(\!\frac{r_\text{D}\gamma^{\frac{1}{\alpha}}}{u}\!\right)\!\!\right) 
   \!\right\},
  \end{split}
 \end{align}
\end{small}%
\setlength{\abovedisplayskip}{0pt}
\begin{small}
 \begin{align}
  \begin{split}
 \mathcal{C}_{31} 
 \!=\! &\left\{ 
   t, \psi \!:  
   t \!\in\!\! 
   \left(\!\max \!\left\{\!\frac{u^2 - r_\text{R}^2}
   {2(u \cos \psi \!-\! r_\text{R})}, 0 \!\right\}\!,\infty \!\right)\!,
   \psi \in\! 
   \left(\!-\pi, -\!\arccos\!\left(\frac{r_\text{R}}{u}\right)\!\right) 
   \!\cup\! 
   \left(\!\arccos\!\left(\frac{r_\text{R}}{u}\right)\!, 
   \pi\!\right)\!
   \!\right\},
  \end{split}
 \end{align}
\end{small}%
\setlength{\abovedisplayskip}{0pt}
\begin{small}
 \begin{align}
  \begin{split}
 \mathcal{C}_{32} 
 \!=\! &\left\{ 
   t, \psi \!:  
   t \!\in\! 
   \left(0,\frac{u^2 - r_\text{R}^2}
   {2(u \cos \psi \!-\! r_\text{R})}\!\right)\!,
   \psi \!\in\!
   \left(\!-\arccos\!\left(\frac{r_\text{R}}{u}\right), \arccos\!\left(\frac{r_\text{R}}{u}\right)\!\right) 
   \!\right\}.
  \end{split}
 \end{align}
\end{small}%
 The functions $\mathbb{P}_\mathcal{D}$, $\mathbb{P}_\mathcal{R}$, $f_\emph{D}(\cdot)$ and $f_\emph{R}(\cdot)$ are given in Eq. (\ref{XD}), (\ref{XR}), (\ref{PDF_D}) and (\ref{PDF_R}), respectively.
 \end{thm}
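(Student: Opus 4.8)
The plan is to prove Theorem~\ref{thm1} by the law of total probability over the user's association state, then to reduce $\{\text{SINR}>\tau\}$ to a threshold event on the serving fading $h$, which the $\exp(1)$ law of $h$ turns into a product of a noise term and two interference Laplace functionals.

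First I would write $\mathcal{P}=\mathbb{P}[\text{SINR}>\tau,\mathcal{D}]+\mathbb{P}[\text{SINR}>\tau,\mathcal{R}]+\mathbb{P}[\text{SINR}>\tau,\mathcal{O}]$. On $\mathcal{O}$ the received power is zero, so the last term vanishes. For the $\mathcal{D}$-term I would condition on $\mathcal{D}$ (weight $\mathbb{P}_\mathcal{D}$ from~\eqref{XD}) and on the serving distance $r_\text{D}=x$ with density $f_\text{D}(x)$ from~\eqref{PDF_D}; for the $\mathcal{R}$-term, on $\mathcal{R}$ (weight $\mathbb{P}_\mathcal{R}$ from~\eqref{XR}) and on the equivalent reflected distance $r_\text{R}=x$ with density $f_\text{R}(x)$ from~\eqref{PDF_R}; the serving path loss is then $P\!L_\text{D}(x)$, resp. $P\!L_\text{R}(x)$. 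By~\eqref{G0} the serving beamforming gain $G_0$ equals $N_\text{B}N_\text{U}$ with probability $p_{\text{E}_\text{B}}(\sigma_\text{B},\theta_\text{B})p_{\text{E}_\text{U}}(\sigma_\text{U},\theta_\text{U})$ and $0$ otherwise, and only the former sub-event can give coverage, which pulls this factor out front. Writing $P\!L$ for the serving path loss, on $\{G_0=N_\text{B}N_\text{U}\}$ the event $\{\text{SINR}>\tau\}$ becomes $\{h>\tau(N_0+I_\text{D}+I_\text{R})/(N_\text{B}N_\text{U}P\!L)\}$, and since $h\sim\exp(1)$ is independent of $N_0,I_\text{D},I_\text{R}$, its complementary CDF turns the conditional coverage probability into
\[
e^{-\tau N_0/(N_\text{B}N_\text{U}P\!L)}\,\mathbb{E}\big[e^{-\tau I_\text{D}/(N_\text{B}N_\text{U}P\!L)}\big]\,\mathbb{E}\big[e^{-\tau I_\text{R}/(N_\text{B}N_\text{U}P\!L)}\big],
\]
the factorization holding because the per-path fading and beam-alignment marks in $I_\text{D}$ and those in $I_\text{R}$ are mutually independent given the point processes; with $s=\tau/(N_\text{B}N_\text{U}P\!L)$ these two expectations are precisely the factors $\mathcal{L}_{I_\text{D}}(\cdot)$ and $\mathcal{L}_{I_\text{R}}(\cdot)$ (evaluated at $-s$) appearing in the statement.

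Next I would evaluate the two Laplace functionals. For $\mathcal{L}_{I_\text{D}}$ I would condition on $\phi_\text{B}$: an interfering BS at distance $r$ contributes $\mathbb{I}_\text{L}(r)P\!L_\text{D}(r)G_i h_i$ with $G_i$ from~\eqref{Gi} and $h_i\sim\exp(1)$, so its per-point Laplace factor is $1-p_\text{L}(r)\frac{\theta_\text{B}\theta_\text{U}}{4\pi^2}\frac{sN_\text{B}N_\text{U}P\!L_\text{D}(r)}{1+sN_\text{B}N_\text{U}P\!L_\text{D}(r)}$; the PGFL of the HPPP $\phi_\text{B}$, a switch to polar coordinates, and the identity $N_\text{B}N_\text{U}\theta_\text{B}\theta_\text{U}=4\pi^2$ then produce the single exponential of $\mathcal{L}_{I_\text{D}}^{\text{d}}$ and $\mathcal{L}_{I_\text{D}}^{\text{r}}$. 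The lower radial limit comes from the association rule: an interfering LoS direct link at distance $u$ must be weaker than the serving one, i.e. $u^{-\alpha}<P\!L_\text{D}(x)$ (so $u>x$) in case $\mathcal{D}$, and $u^{-\alpha}<\gamma x^{-\alpha}$ (so $u>x\gamma^{-1/\alpha}$) in case $\mathcal{R}$. For $\mathcal{L}_{I_\text{R}}$ I would use a nested PGFL: for a fixed interfering BS at distance $u$, the sum over RISs $R_k$ of $\mathbb{I}_\text{L}(r_{B_iR_k})\mathbb{I}_\text{L}(r_{R_k})\mathbb{I}_\text{F}(\cdot)P\!L_\text{R}(r_{B_iR_k}+r_{R_k})G_i h_{ik}$ is handled by the PGFL of $\phi_\text{R}$, using the LoS and feasibility probabilities from~\eqref{IL} and~\eqref{IF}, giving the inner exponential over the set of feasible, signal-dominated reflected links; the PGFL of $\phi_\text{B}$ then produces the outer exponential. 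The region $\mathcal{C}_2$ (resp. $\mathcal{C}_3$) is the set of $(t,\psi)$ for which such an interfering reflected link is weaker than the serving link, namely $\gamma(\sqrt{u^2+t^2-2ut\cos\psi}+t)^{-\alpha}<P\!L_\text{D}(x)$ (resp. $<\gamma x^{-\alpha}$); solving this inequality for $t$ and splitting according to the sign of $u\cos\psi-r_\text{D}\gamma^{1/\alpha}$ (resp. $u\cos\psi-r_\text{R}$) yields the two sub-regions $\mathcal{C}_{21},\mathcal{C}_{22}$ (resp. $\mathcal{C}_{31},\mathcal{C}_{32}$).

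Finally I would set $P\!L=P\!L_\text{D}(x)$, $s=\tau/(N_\text{B}N_\text{U}P\!L_\text{D}(x))$ and integrate against $f_\text{D}(x)$ in case $\mathcal{D}$, set $P\!L=P\!L_\text{R}(x)$, $s=\tau/(N_\text{B}N_\text{U}P\!L_\text{R}(x))$ and integrate against $f_\text{R}(x)$ in case $\mathcal{R}$, weight by $\mathbb{P}_\mathcal{D}$ and $\mathbb{P}_\mathcal{R}$, and add, which reproduces the claimed expression. I expect the main obstacle to be the bookkeeping in the reflected-interference term: correctly propagating the conditional independence of $\phi_\text{B}$, $\phi_\text{R}$, the blockage and feasibility indicators and the per-link marks through the nested PGFL, and deriving the exact geometry of $\mathcal{C}_2$ and $\mathcal{C}_3$ from the association constraint, including the case split that creates the two sub-regions in each. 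A secondary point to handle carefully is that, as is standard in this line of work, the association event and the interfering field are treated as decoupled, so that the precomputed $\mathbb{P}_\mathcal{D},\mathbb{P}_\mathcal{R},f_\text{D},f_\text{R}$ may be combined with Laplace functionals evaluated over the full point processes.
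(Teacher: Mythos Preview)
Your proposal is correct and follows essentially the same route as the paper's proof in Appendix~A: decompose over the association events, condition on the serving distance, use the distribution of $G_0$ to pull out the alignment factor, exploit $h\sim\exp(1)$ to obtain the noise exponential times two Laplace functionals, and evaluate the latter via the PGFL of $\phi_\text{B}$ (and a nested PGFL over $\phi_\text{R}$ for the reflected interference), with the integration regions $\mathcal{C}_2,\mathcal{C}_3$ coming from the association constraint. Your explicit use of $N_\text{B}N_\text{U}\theta_\text{B}\theta_\text{U}=4\pi^2$ and your flag on the decoupling between association and the interfering field are exactly the bookkeeping points the paper leaves implicit.
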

 \begin{proof}
 See Appendix A.
 \end{proof}
 \vspace*{-13pt}
 \hrulefill
 \vspace*{-10pt}
 \end{figure*}
The coverage probability of the network consists of three terms
\begin{align}\label{CP_define}
  \mathcal{P} 
  =& \mathbb{P}_\mathcal{D} \cdot \mathbb{P}[\text{SINR}\!>\!\tau|\mathcal{D}] 
    \!+ \!\mathbb{P}_\mathcal{R} \cdot \mathbb{P}[\text{SINR}\!>\!\tau|\mathcal{R}] 
    \!+ \!\mathbb{P}_\mathcal{O}\cdot 0,
\end{align}
 where $\mathbb{P}[\text{SINR}\!>\!\tau|\mathcal{D}]$ and $\mathbb{P}[\text{SINR}\!>\!\tau|\mathcal{R}]$ are the coverage probability conditioned on that the typical user associates to a BS through a direct link and a reflected link, respectively. Specifically, when the typical user has no BS to associate, the conditional coverage probability is zero. The full expression of the coverage probability is displayed in Theorem \ref{thm1} on the next page.

 We observe that the amount of the resources used in channel estimation, i.e., the value of the unit training overhead $\beta$, primarily affects the coverage probability on the $p_{\text{E}_\text{B}}(\sigma_\text{B}, \theta_\text{B}) p_{\text{E}_\text{U}}(\sigma_\text{U}, \theta_\text{U})$ term. In this reason, we can simplify the relationship between the coverage probability $\mathcal{P}$ and the unit training overhead $\beta$, leading to the following Corollary.

\begin{col}\label{col1}
The coverage probability $\mathcal{P}$ in Theorem \ref{thm1} is a monotonically increasing function of the unit training overhead $\beta$.
\end{col}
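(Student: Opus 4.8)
The plan is to exploit a structural feature of the expression in Theorem~\ref{thm1}: among all the quantities that appear there, the unit training overhead $\beta$ enters \emph{only} through the beam-alignment error variances $\sigma_\text{B}^2$ and $\sigma_\text{U}^2$, and these in turn appear only in the leading factor $p_{\text{E}_\text{B}}(\sigma_\text{B},\theta_\text{B})\,p_{\text{E}_\text{U}}(\sigma_\text{U},\theta_\text{U})$. Everything else — the association probabilities $\mathbb{P}_\mathcal{D},\mathbb{P}_\mathcal{R}$, the link-distance densities $f_\text{D},f_\text{R}$, the four Laplace-transform functions $\mathcal{L}_{I_\text{D}}^\text{d},\mathcal{L}_{I_\text{R}}^\text{d},\mathcal{L}_{I_\text{D}}^\text{r},\mathcal{L}_{I_\text{R}}^\text{r}$, and the regions $\mathcal{C}_2,\mathcal{C}_3$ — depends only on the network geometry, the gains $N_\text{B},N_\text{U}$, the beamwidths $\theta_\text{B},\theta_\text{U}$ and the threshold $\tau$, never on $\beta$. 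So the first step is to write $\mathcal{P}=p_{\text{E}_\text{B}}(\sigma_\text{B},\theta_\text{B})\,p_{\text{E}_\text{U}}(\sigma_\text{U},\theta_\text{U})\cdot\Xi$, where $\Xi$ is $\beta$-independent and manifestly non-negative (it is a sum of two terms, each a product of a probability, a positive exponential, two Laplace transforms of the form $\exp(\cdot)>0$, and a density, integrated in $x$). Since $p_{\text{E}_\text{B}}$ and $p_{\text{E}_\text{U}}$ are both non-negative, it then suffices to show that each $p_{\text{E}_j}$ is monotonically increasing in $\beta$.

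The next step reduces this to a one-variable analytic fact. From Section~\ref{framestruc}, $\sigma_j^2=k_j\pi^2\sigma_\text{E}^2=k_j\pi^2/(1+\beta\,\text{SNR})$ is strictly decreasing in $\beta$, so it is enough to show $p_{\text{E}_j}(\sigma_j,\theta_j)$ is decreasing in $\sigma_j^2$. Substituting $t=1/\sqrt{2\sigma_j^2}$ (which increases as $\sigma_j^2$ decreases) and setting $a=\theta_j/2$, $b=\pi$ — note $0<a\le b$ since $\theta_j\le 2\pi$ — Eq.~(\ref{PE}) becomes $p_{\text{E}_j}=g(t):=\operatorname{erf}(at)/\operatorname{erf}(bt)$, and the claim is that $g$ is non-decreasing on $(0,\infty)$. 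Differentiating, $g'(t)$ has the same sign as $a\,e^{-a^2t^2}\operatorname{erf}(bt)-b\,e^{-b^2t^2}\operatorname{erf}(at)$; dividing by $\operatorname{erf}(at)\operatorname{erf}(bt)>0$, this is non-negative iff $c\mapsto c\,e^{-c^2t^2}/\operatorname{erf}(ct)$ is non-increasing, and after the substitution $x=ct$ this is precisely the statement that $\psi(x):=x\,e^{-x^2}/\operatorname{erf}(x)$ is non-increasing on $(0,\infty)$.

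Finally I would establish that $\psi$ is (strictly) decreasing by a short calculus argument: $\psi'(x)$ has the same sign as $-\chi(x)$, where $\chi(x):=\tfrac{2x}{\sqrt{\pi}}e^{-x^2}-(1-2x^2)\operatorname{erf}(x)$; one checks directly that $\chi(0)=0$ and $\chi'(x)=4x\operatorname{erf}(x)>0$ for $x>0$, hence $\chi>0$ and $\psi'<0$ on $(0,\infty)$. Unwinding the chain, $\psi$ decreasing $\Rightarrow g$ increasing in $t\Rightarrow p_{\text{E}_j}$ increasing in $\beta\Rightarrow p_{\text{E}_\text{B}}p_{\text{E}_\text{U}}$ increasing in $\beta\Rightarrow \mathcal{P}$ increasing in $\beta$. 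The only genuinely non-routine step is the monotonicity of the error-function ratio $g(t)=\operatorname{erf}(at)/\operatorname{erf}(bt)$; the rest is bookkeeping, modulo one subtlety worth flagging — in the degenerate case $a=b$ (omnidirectional beams, $\theta_j=2\pi$) one has $g\equiv 1$, so the statement is properly read as ``non-decreasing,'' and it is strictly increasing whenever $\theta_j<2\pi$.
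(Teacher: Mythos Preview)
Your proposal is correct and follows essentially the same route as the paper: factor $\mathcal{P}=p_{\text{E}_\text{B}}p_{\text{E}_\text{U}}\cdot\Xi$ with $\Xi\ge 0$ independent of $\beta$, then reduce the monotonicity of the $\operatorname{erf}$ ratio to the fact that $x\mapsto x e^{-x^2}/\operatorname{erf}(x)$ is decreasing on $(0,\infty)$. The only differences are cosmetic --- you treat each factor $p_{\text{E}_j}$ separately whereas the paper differentiates the product at once, and you supply the short calculus argument for the monotonicity of $\psi$ (plus the $\theta_j=2\pi$ edge case) that the paper simply asserts.
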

\begin{proof}
See Appendix B.  
\end{proof}

On the one hand, this result reveals that, to achieve the optimal coverage performance, the resources allocated to channel estimation should be as much as possible. Owing to that the penalty of redundant channel estimation resources only affects the ASE and the EE, the coverage performance merely enjoys the benefits of abundant channel estimation resources. On the other hand, the coverage-optimal resource allocation scheme has nothing to do with network deployment, i.e., $\lambda_\text{B}$ and $\lambda_\text{R}$, which enlightens a separated design of network deployment and resource allocation.

\subsection{Area Spectrum Efficiency}
As defined in Section \ref{performmetric}, the ASE $\mathcal{A}$ is expressed as
\begin{align}\label{ASE}
 \begin{split}
\mathcal{A} 
=& \frac{T_\text{D}}{T} \lambda_\text{B} \mathbb{E}[\text{log}_2 (1+\text{SINR}) \mathbb{I}\{\text{SINR}>\tau\}]\\
=& \frac{T\!-\!\beta(M_\text{B}M_\text{R}M_\text{U}\!+\!M_\text{B}M_\text{U})}{T} \lambda_\text{B}
  \!\!\left(\!\frac{1}{\ln\!2} 
  \!\!\int_{\ln(1\!+\tau)}^{\infty} 
  \!\!\!\!\!\!\mathbb{P}[\text{SINR}\!>\!e^t\!\!-\!1] \text{d} t 
  \!+ \!\text{log}_2(1\!+\!\tau) \mathbb{P}[\text{SINR}\!>\!\tau]\!\!\right).
  \end{split}
\end{align}
The full expression of $\mathcal{A}$ can be obtained by applying the results in Theorem \ref{thm1} to Eq. (\ref{ASE}). We neglect it due to page limitation. Similar to the conclusions on the coverage probability, we propose the following Corollary on the ASE-optimal unit training overhead $\beta^*_\mathcal{A}$.

\begin{col}\label{col2}
The optimal unit training overhead $\beta^*_\mathcal{A}$ that maximizes the ASE $\mathcal{A}$ satisfies the following condition
\begin{align}\label{betaASE}
  \begin{split}
 &\frac{\emph{SNR}}{\sqrt{\pi} f(\beta^*_\mathcal{A})}
 \!\Big(\frac{T}{M_\emph{B}M_\emph{R}M_\emph{U}\!+\!M_\emph{B}M_\emph{U}}\!-\!\beta^*_\mathcal{A}\!\Big)
  \!\!\Bigg(\!\!
 \frac{a e^{-a^2 \!f^2\!(\beta^*_\mathcal{A})}} {\emph{erf}(a f\!(\beta^*_\mathcal{A})\!)} \!+\!
 \frac{b e^{-b^2 \!f^2\!(\beta^*_\mathcal{A})}} {\emph{erf}(b f\!(\beta^*_\mathcal{A})\!)} \!-\!
 \frac{c e^{-c^2 \!f^2\!(\beta^*_\mathcal{A})}} {\emph{erf}(c f\!(\beta^*_\mathcal{A})\!)} \!-\!
 \frac{d e^{-d^2 \!f^2\!(\beta^*_\mathcal{A})}} {\emph{erf}(d f\!(\beta^*_\mathcal{A})\!)}
 \!\!\Bigg) 
 \!\!=\! 1,  
  \end{split}
\end{align}
where $f(\beta^*_\mathcal{A}) = \sqrt{1+\beta^*_\mathcal{A} \emph{SNR}}$, $a= \frac{\theta_\emph{B}}{2\pi \sqrt{2k_\emph{B}}}$, $b= \frac{\theta_\emph{U}}{2\pi \sqrt{2k_\emph{U}}}$, $c= \frac{1}{\sqrt{2k_\emph{B}}}$ and $d= \frac{1}{\sqrt{2k_\emph{U}}}$.
\end{col}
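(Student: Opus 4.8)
The plan is to reduce the maximization of $\mathcal{A}$ over $\beta$ to a single scalar stationarity equation, and then recover (\ref{betaASE}) by differentiating. First I would isolate how $\beta$ enters the SINR statistics. Inspecting Theorem \ref{thm1} and its derivation, the unit training overhead $\beta$ affects $\mathbb{P}[\text{SINR}>s]$ \emph{only} through the beam–alignment variance $\sigma_j^2 = k_j\pi^2\sigma_\text{E}^2 = k_j\pi^2/(1+\beta\,\text{SNR})$, hence only through the prefactor $p_{\text{E}_\text{B}}(\sigma_\text{B},\theta_\text{B})\,p_{\text{E}_\text{U}}(\sigma_\text{U},\theta_\text{U})$: the link–distance densities $f_\text{D},f_\text{R}$, the association probabilities $\mathbb{P}_\mathcal{D},\mathbb{P}_\mathcal{R}$, and all four interference Laplace transforms are $\beta$–free. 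Since this prefactor is independent of the SINR threshold as well, it pulls out of the integral $\int_{\ln(1+\tau)}^{\infty}\mathbb{P}[\text{SINR}>e^t-1]\,\text{d}t$ in (\ref{ASE}). Writing $C=M_\text{B}M_\text{R}M_\text{U}+M_\text{B}M_\text{U}$ and collecting every $\beta$–independent factor into a positive constant $K$, I obtain $\mathcal{A}(\beta)=K\,(T-\beta C)\,p_{\text{E}_\text{B}}(\sigma_\text{B}(\beta),\theta_\text{B})\,p_{\text{E}_\text{U}}(\sigma_\text{U}(\beta),\theta_\text{U})$; abbreviating the last two factors as $p_{\text{E}_\text{B}}(\beta)$ and $p_{\text{E}_\text{U}}(\beta)$, maximizing $\mathcal{A}$ is equivalent to maximizing $g(\beta)\triangleq(T-\beta C)\,p_{\text{E}_\text{B}}(\beta)\,p_{\text{E}_\text{U}}(\beta)$ over $\beta\in[0,T/C)$.

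Next I would make the alignment probabilities explicit. Plugging $\sigma_j^2=k_j\pi^2/(1+\beta\,\text{SNR})$ into (\ref{PE}) and setting $f(\beta)=\sqrt{1+\beta\,\text{SNR}}$ collapses the error functions to $p_{\text{E}_\text{B}}(\beta)=\text{erf}(a f(\beta))/\text{erf}(c f(\beta))$ and $p_{\text{E}_\text{U}}(\beta)=\text{erf}(b f(\beta))/\text{erf}(d f(\beta))$, with $a,b,c,d$ exactly the constants in the statement, so
\begin{align}
  g(\beta)=(T-\beta C)\,\frac{\text{erf}(a f(\beta))\,\text{erf}(b f(\beta))}{\text{erf}(c f(\beta))\,\text{erf}(d f(\beta))}.\notag
\end{align}
The function $g$ is smooth and strictly positive on $(0,T/C)$ and vanishes at $\beta=T/C$, so its maximizer is either the boundary point $0$ or an interior critical point; arguing (as the corollary implicitly assumes) that channel estimation is beneficial, i.e. $g'(0)>0$, the optimal $\beta^*_\mathcal{A}$ is interior and therefore solves $g'(\beta^*_\mathcal{A})=0$.

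Then the remaining work is a routine differentiation. Using $\frac{\text{d}}{\text{d}x}\text{erf}(x)=\frac{2}{\sqrt{\pi}}e^{-x^2}$ and $f'(\beta)=\text{SNR}/(2f(\beta))$, logarithmic differentiation of $g$ gives
\begin{align}
  \frac{g'(\beta)}{g(\beta)}=-\frac{C}{T-\beta C}+\frac{\text{SNR}}{\sqrt{\pi}\,f(\beta)}\!\left(\frac{a e^{-a^2 f^2(\beta)}}{\text{erf}(a f(\beta))}+\frac{b e^{-b^2 f^2(\beta)}}{\text{erf}(b f(\beta))}-\frac{c e^{-c^2 f^2(\beta)}}{\text{erf}(c f(\beta))}-\frac{d e^{-d^2 f^2(\beta)}}{\text{erf}(d f(\beta))}\right).\notag
\end{align}
Setting the right-hand side to zero at $\beta=\beta^*_\mathcal{A}$ and multiplying through by $(T-\beta^*_\mathcal{A}C)/C=T/C-\beta^*_\mathcal{A}$ reproduces (\ref{betaASE}) verbatim, recalling $T/C=T/(M_\text{B}M_\text{R}M_\text{U}+M_\text{B}M_\text{U})$. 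The EE statement is then immediate: since $\mathcal{E}=\mathcal{A}/(\lambda_\text{B}P_\text{B}+\lambda_\text{R}P_\text{R})$ and the denominator is independent of $\beta$, $\mathcal{E}$ and $\mathcal{A}$ have the same maximizer, so $\beta^*_\mathcal{E}=\beta^*_\mathcal{A}$ obeys the same condition.

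The genuinely delicate step, which I would do carefully, is the qualitative analysis of $g$ required to conclude that (\ref{betaASE}) pins down \emph{the} optimizer rather than merely some critical point: ruling out the corner solution $\beta=0$ in the stated regime, and establishing uniqueness of the interior root. For uniqueness I would aim to show that $\beta\mapsto(T/C-\beta)\,\frac{\text{d}}{\text{d}\beta}\ln\!\big(p_{\text{E}_\text{B}}(\beta)p_{\text{E}_\text{U}}(\beta)\big)$ is strictly decreasing on $[0,T/C)$, which should reduce to the monotonicity of $x\mapsto x e^{-x^2}/\text{erf}(x)$ composed with the increasing map $f$, together with the decreasing affine factor $T/C-\beta$; everything else is bookkeeping.
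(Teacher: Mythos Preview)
Your proposal is correct and follows essentially the same approach as the paper: isolate the $\beta$-dependent multiplier $\big(\tfrac{T}{C}-\beta\big)\,p_{\text{E}_\text{B}}(\beta)\,p_{\text{E}_\text{U}}(\beta)$, differentiate (you via logarithmic differentiation, the paper by recycling the derivative computation from Appendix~B), and set the result to zero. Your treatment is in fact more careful than the paper's, which is very terse and does not address the uniqueness or boundary issues you flag in your final paragraph.
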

\begin{proof}
See Appendix C. 
\end{proof}

Although we are not able to derive a simpler expression of the ASE-optimal unit training overhead $\beta^*_\mathcal{A}$, the numerical result of $\beta^*_\mathcal{A}$ can be obtained with low complexity when the value of other parameters are given. We will discuss the influence of other parameters on $\beta^*_\mathcal{A}$ in the next section.

From Corollary \ref{col1} and Corollary \ref{col2}, we notice that the optimal unit training overhead to maximize the coverage probability and the ASE are different. As a consequence, how to choose a proper unit training overhead to balance the coverage probability (coverage performance) and the ASE (rate performance) should be considered.

\subsection{Energy Efficiency}
The expression of the EE, expressed as $\mathcal{E} = \mathcal{A}/(\lambda_\text{B} P_\text{B}+ \lambda_\text{R} P_\text{R})$, can be directly obtained from the expression of $\mathcal{A}$. And the EE-optimal unit training overhead is the same as the ASE-optimal unit training overhead, i.e., $\beta^*_\mathcal{E} = \beta^*_\mathcal{A}$. 

\begin{table}[t]
\renewcommand{\arraystretch}{1.3}
\caption{Default Values of Parameters}
\label{Parameters}{}
\centering
\begin{tabular}{c|c|c|c}
\hline\hline
\bfseries Parameter & \bfseries Notation and Value & \bfseries Parameter & \bfseries Notation and Value\\
\hline\hline
Path loss exponent& $\alpha = 4 $ & Reflection power attenuation & $\gamma = 0.85 $\\
\hline
BS density & $\lambda_\text{B} = 10\text{ BSs/km}^2$ & Blockage density & $\lambda_\text{L} = 500\text{ Blockages/km}^2$  \\
\hline
\multirow{3}{*}{Number of antennas or elements} & $ M_\text{B} = 16$ & RIS deployment fraction& $\mu = 0.6$ \\
\cline{2-4}
& $ M_\text{R} = 16$ & Size of blockages & $l = 15\text{ m}$\\
\cline{2-4}
& $ M_\text{U} = 4$ & SINR threshold & $\tau = 3 \text{ dB}$\\
\hline
\multirow{2}{*}{Beam alignment error parameters}
& $k_\text{B} = 0.02$ &Frame length & $T = 4480 \text{ Symbols}$\\
\cline{2-4}
& $k_\text{U} = 0.08$ &Unit training overhead & $\beta = 1 \text{ Symbols/path}$\\
\hline
\multirow{2}{*}{Transmit or reflect power} & $P_\text{B} = 40 \text{ dBm}$ & Normalized noise power & $N_0 = -90\text{ dB}$\\
\cline{2-4}
& $P_\text{R} = 15 \text{ dBm}$ & Average SNR of the channel & $\text{SNR} = 16 \text{ dB}$\\
\hline
\end{tabular}
\end{table}

\section{Numerical Results}\label{secnum}
In this section, we focus on the numerical results of the coverage probability, the ASE and the EE with respect to the unit training overhead $\beta$ and the RIS deployment fraction $\mu$. The properties of the ASE-optimal (also the EE-optimal) unit training overhead $\beta^*_{\mathcal{A}}$ will be discussed under different system settings. The default values of parameters are listed in Table \ref{Parameters}. In particular, the optimal performance metrics are marked with stars in the following figures.

\subsection{Performance w.r.t Resource Allocation}
In this subsection, we study the performance metrics as functions of the unit training overhead $\beta$ under different system settings, i.e., the number of BS antennas $M_\text{B}$, beam alignment error parameters $k_\text{B}$ and $ k_\text{U}$, frame length $T$ and average SNR. In all the cases shown in Fig. \ref{beta_MB}, \ref{beta_kBkU}, \ref{beta_T} and \ref{beta_SNR}, the coverage probability increases monotonically w.r.t. the unit training overhead $\beta$, as we have proved in Corollary \ref{col1}. Since using more resources on channel estimation brings a smaller beamforming error, the association link has a lower probability of beam misalignment and the coverage probability is promoted. As we have mentioned, the feasible region of the unit training overhead is $\beta \in [0,\frac{T}{M_\text{B}M_\text{R}M_\text{U}+M_\text{B}M_\text{U}})$, and as a consequence, the maximum value of $\beta$ varies with the number of BS antennas $M_\text{B}$ and frame length $T$ in Fig. \ref{beta_MB} and \ref{beta_T}, respectively.

\begin{figure}[t]
  \setlength{\abovecaptionskip}{0pt}
  \setlength{\belowcaptionskip}{-20pt}
  \centering
  \vspace{0in}
  \subfigure[Coverage probability]{\label{CP_beta_MB}
    \includegraphics[width=0.48\textwidth]{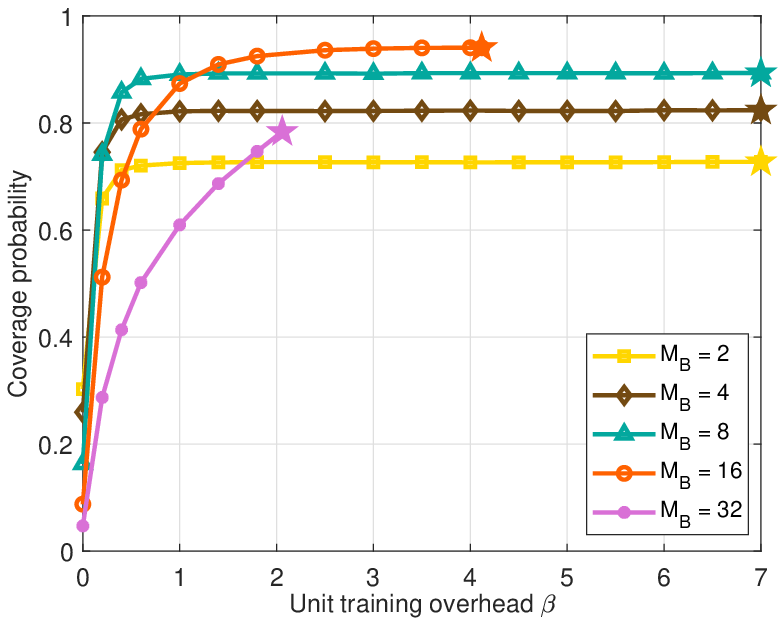}}
  \subfigure[ASE and EE]{\label{ASE_beta_MB} 
    \includegraphics[width=0.485\textwidth]{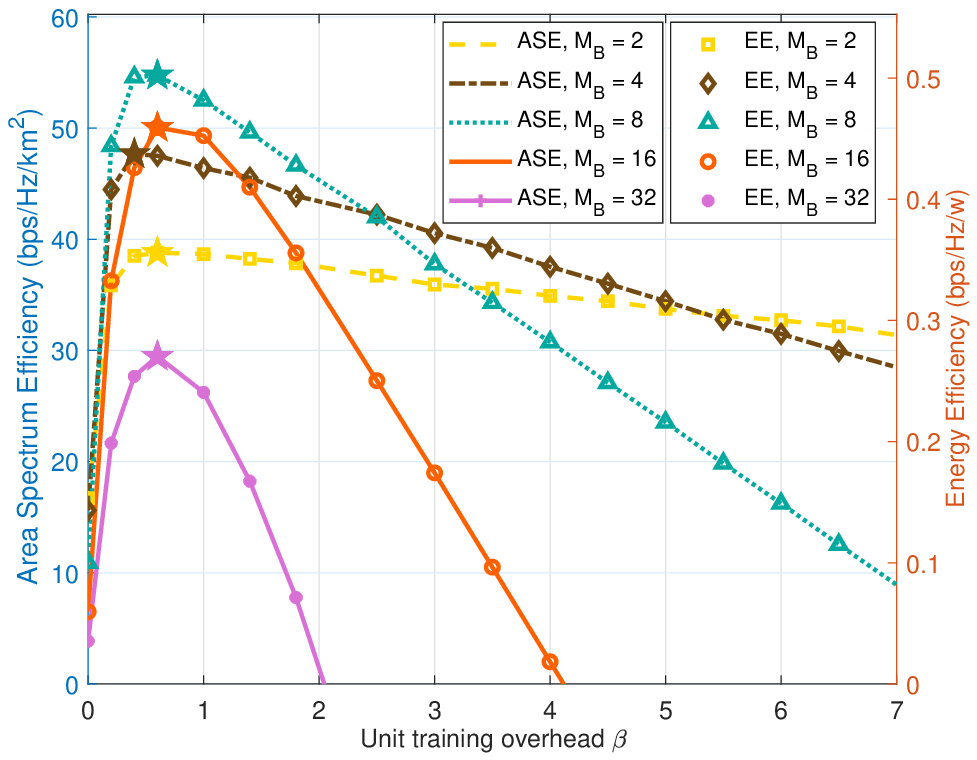}}
  \caption{Coverage probability, ASE and EE scaling with the unit training overhead $\beta$ under different number of BS antennas $M_\text{B}$.}
  \label{beta_MB}
\end{figure}

We first evaluate the performance metrics w.r.t the unit training overhead $\beta$ affected by the number of BS antennas $M_\text{B}$ in Fig. \ref{beta_MB}. Since the parameters of BSs and UEs are symmetric in our model, the influence of the number of UE antennas is analogous. Generally, more antennas result in a narrower beam and a higher probability of beam alignment error, while the beamforming gain is also higher. When the number of BS antennas is relatively small, e.g., $M_\text{B} = 2, 4, 8$, the increasing beamforming gain dominates the performance when $M_\text{B}$ increases, leading to increasing coverage probability in Fig. \ref{CP_beta_MB}. However, when the number of BS antennas is large, e.g., $M_\text{B} = 16, 32$, the severe beam misalignment dominates the performance and increasing $M_\text{B}$ leads to overall decreasing coverage probability in Fig. \ref{CP_beta_MB}. In Fig. \ref{ASE_beta_MB}, the ASE and the EE have the same tendency, since the ratio of ASE to EE, i.e., $\mathcal{A}/\mathcal{E}= \lambda_\text{B} P_\text{B}+ \lambda_\text{R} P_\text{R}$, merely relies on network deployment and energy consumption. Similar to the influence on the coverage probability, increasing the number of BS antennas $M_\text{B}$ first increases and then decreases the ASE and the EE. Moreover, a system equipped with more BS antennas is more sensitive to the change of resource allocation, i.e., the variations of the ASE and the EE are more dramatical w.r.t unit training overhead $\beta$ under large $M_\text{B}$. The optimal unit training overhead $\beta^*_{\mathcal{A}}$ to maximize the ASE and the EE, marked with stars, first decreases, then increases and finally decreases w.r.t the number of BS antennas $M_\text{B}$, resulting from the trade-off between beam alignment error and beamforming gain.

\begin{figure}[t]
  \setlength{\abovecaptionskip}{0pt}
  \setlength{\belowcaptionskip}{-20pt}
  \centering
  \vspace{0in}
  \subfigure[Coverage probability]{\label{CP_beta_kBkU}
    \includegraphics[width=0.48\textwidth]{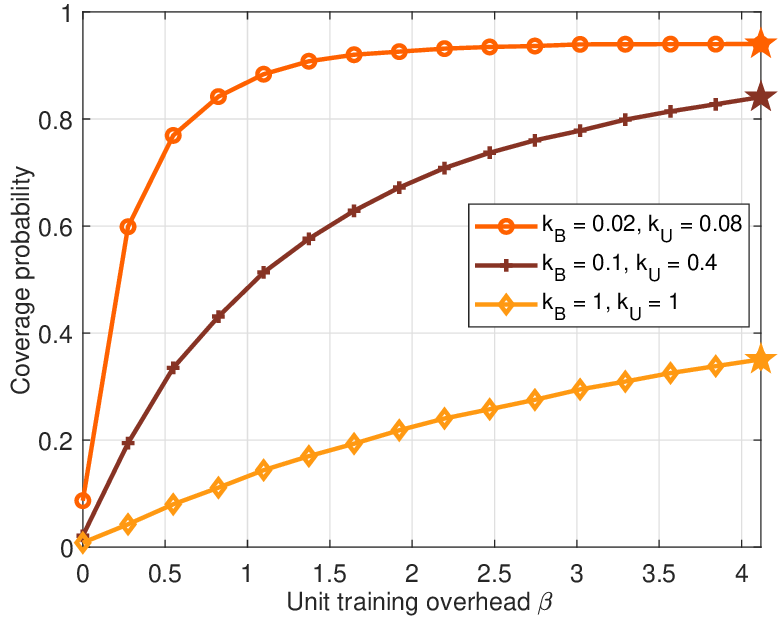}}
  \subfigure[ASE and EE]{\label{ASE_beta_kBkU} 
    \includegraphics[width=0.485\textwidth]{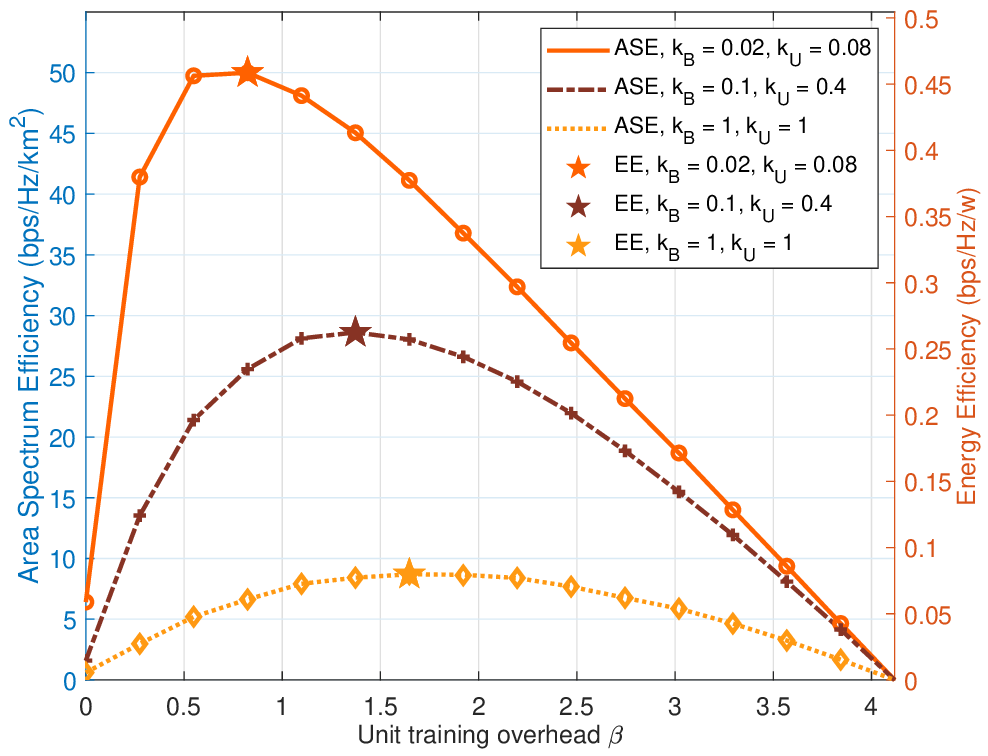}}
  \caption{Coverage probability, ASE and EE scaling with the unit training overhead $\beta$ under different beam alignment error parameters $k_\text{B}$ and $k_\text{U}$.}
  \label{beta_kBkU}
\end{figure}

\begin{figure}[t]
  \setlength{\abovecaptionskip}{0pt}
  \setlength{\belowcaptionskip}{-20pt}
  \centering
  \vspace{0in}
  \subfigure[Coverage probability]{\label{CP_beta_T}
    \includegraphics[width=0.474\textwidth]{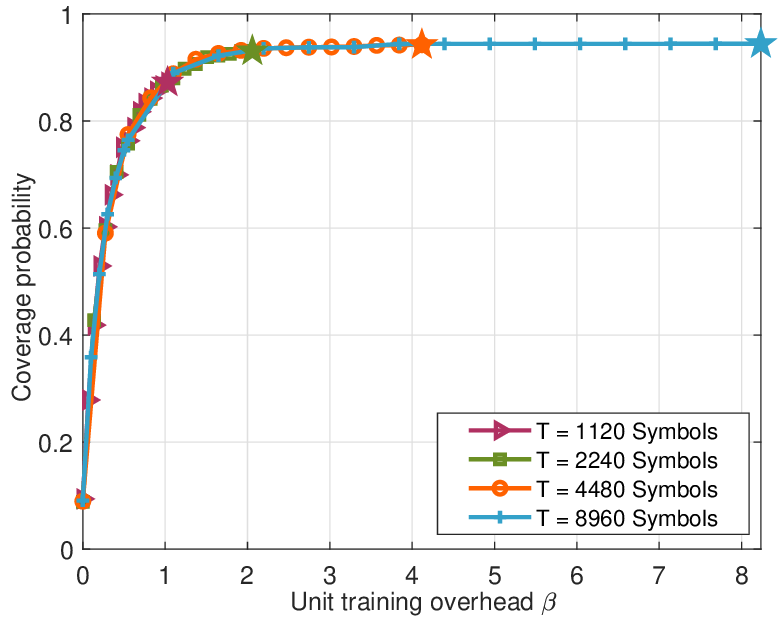}}
  \subfigure[ASE and EE]{\label{ASE_beta_T} 
    \includegraphics[width=0.492\textwidth]{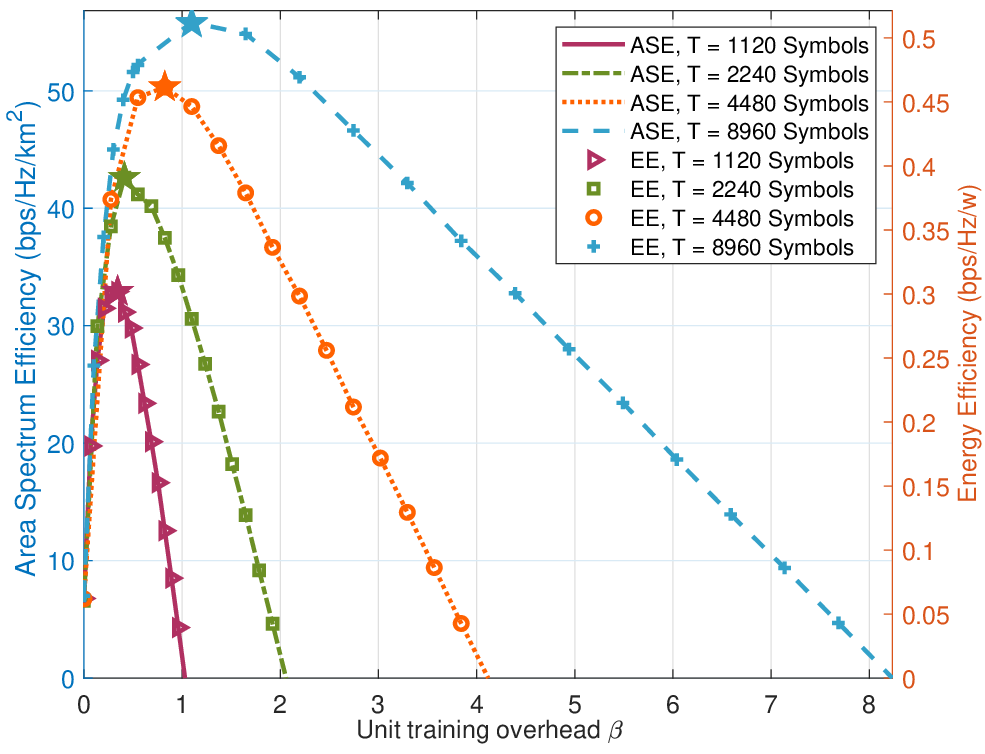}}
  \caption{Coverage probability, ASE and EE scaling with the unit training overhead $\beta$ under different frame length $T$.}
  \label{beta_T}
\end{figure}

The performance under different levels of beam alignment errors of BSs and UEs, i.e., $k_\text{B}$ and $k_\text{U}$, are displayed in Fig. \ref{beta_kBkU}. Recalling that, the terms $k_\text{B} \pi^2$ and $k_\text{U} \pi^2$ are the variances of the beam alignment error without channel estimation for BSs and UEs, respectively. It is shown that the beam alignment error has more impact on system performance than the number of antennas. Larger $k_\text{B}$ and $k_\text{U}$ represent larger variances of the beam alignment error. Consequently, the coverage probability, the ASE and the EE have huge degradation as beam alignment error parameters $k_\text{B}$ and $k_\text{U}$ increase. In addition, larger $k_\text{B}$ and $k_\text{U}$ lead to larger $\beta^*_{\mathcal{A}}$, indicating that more resources for channel estimation should be allocated in order to compensate for the loss brought by the beam alignment error.

Fig. \ref{beta_T} illustrates the performance under different frame length $T$. The results reveal that increasing frame length does not influence the coverage performance in Fig. \ref{CP_beta_T}, except for the region of $\beta$. On the contrary, a longer frame length brings better ASE and EE performance in Fig. \ref{ASE_beta_T}. And the optimal unit training overhead $\beta^*_{\mathcal{A}}$ that maximizes the ASE and the EE also increases monotonically with frame length $T$. The reason is that a longer frame length $T$ leads to wider space for data transmission, given the same channel estimation overhead. Moreover, the ASE and the EE have the same tendency at the same $T$.

\begin{figure}[t]
  \setlength{\abovecaptionskip}{0pt}
  \setlength{\belowcaptionskip}{-20pt}
  \centering
  \vspace{0in}
  \subfigure[Coverage probability]{\label{CP_beta_SNR}
    \includegraphics[width=0.48\textwidth]{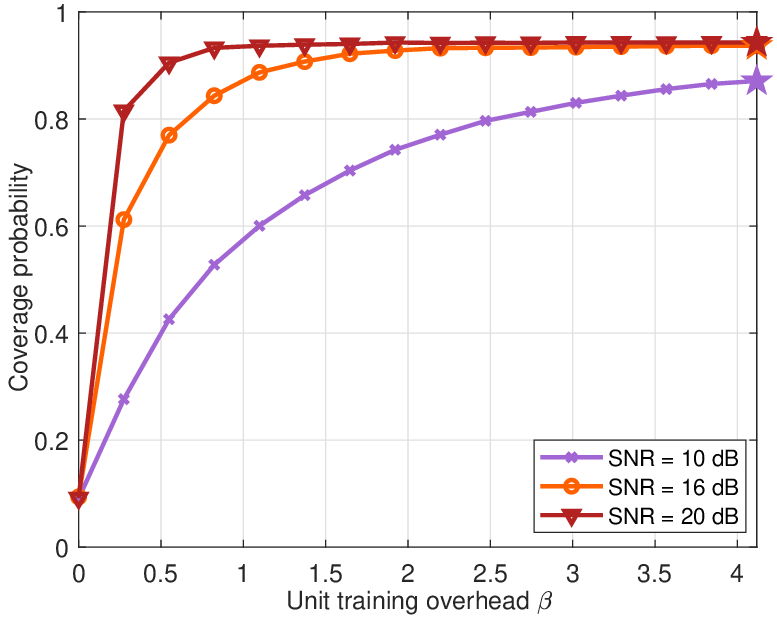}}
  \subfigure[ASE and EE]{\label{ASE_beta_SNR} 
    \includegraphics[width=0.485\textwidth]{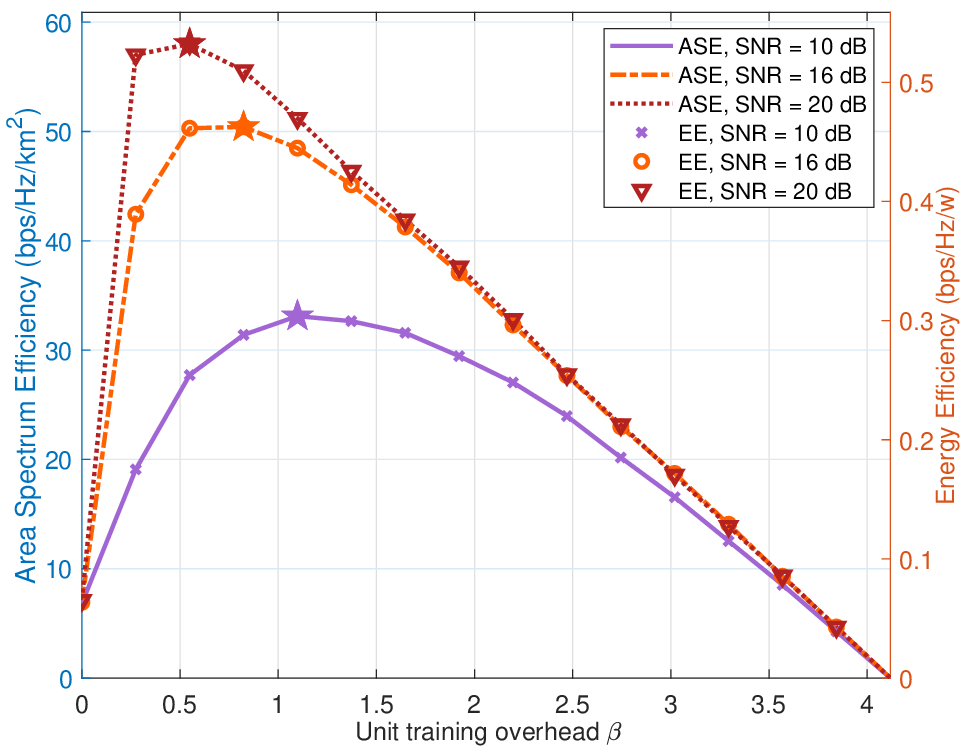}}
  \caption{Coverage probability, ASE and EE scaling with the unit training overhead $\beta$ under different average SNR.}
  \label{beta_SNR}
\end{figure}

The impact of average SNR is displayed in Fig. \ref{beta_SNR}. Naturally, a higher average SNR results in a smaller channel estimation error and thereby better performance in all the cases. However, further increasing average SNR provides marginal gain and the performance is approaching its limit, especially for the coverage probability shown in Fig. \ref{CP_beta_SNR}. Besides, in Fig. \ref{ASE_beta_SNR}, we find that the ASE-optimal unit training overhead $\beta_\mathcal{A}^*$ decreases as average SNR increases, because better channel environment consumes less channel estimation resources to reach its maximum.

\subsection{The Properties of the Optimal Unit Training Overhead $\beta_\mathcal{A}^*$}
Then we focus on the ASE-optimal (as well as the EE-optimal) unit training overhead $\beta_\mathcal{A}^*$ derived in Corollary \ref{col2}. The influence of the number of BS antennas $M_\text{B}$, beam alignment error parameter $k_\text{B}$, frame length $T$ and average SNR are discussed.

\begin{figure}[t]
  \setlength{\abovecaptionskip}{0pt}
  \setlength{\belowcaptionskip}{-20pt}
  \centering
  \vspace{0in}
  \subfigure[The impact of the number of BS antennas $M_\text{B}$ and BS beam alignment error parameter $k_\text{B}$]{\label{Optbeta_kBMB}
    \includegraphics[width=0.47\textwidth]{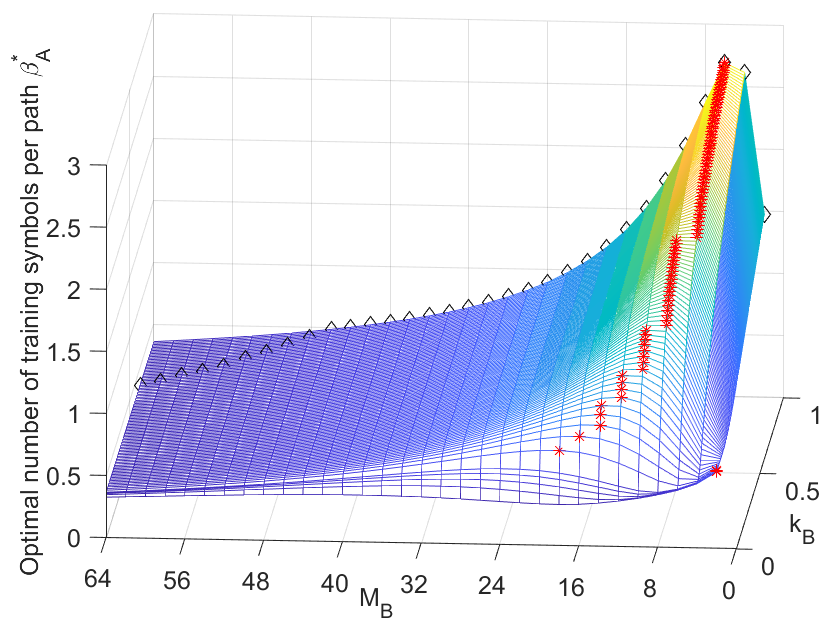}}
  \subfigure[The impact of average SNR and frame length $T$]{\label{Optbeta_SNRandT} 
    \includegraphics[width=0.45\textwidth]{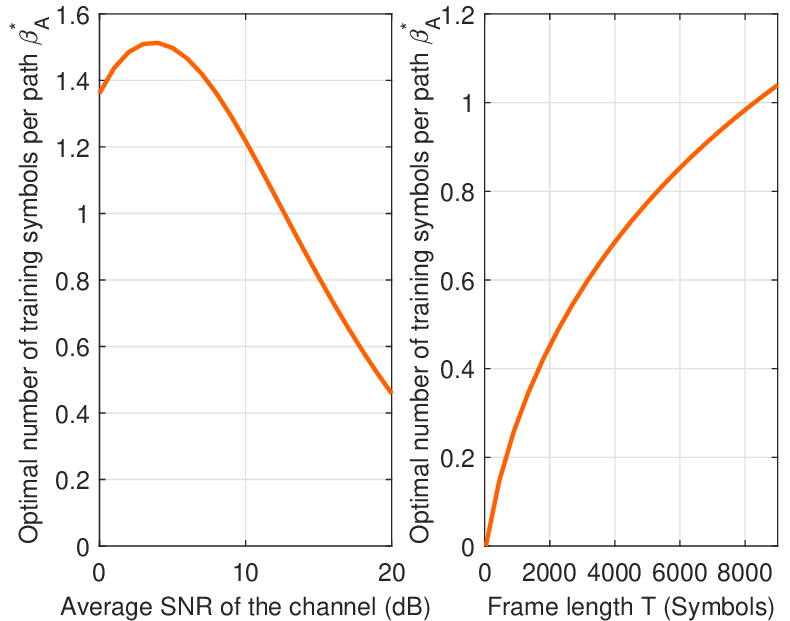}}
  \caption{The impact of some system parameters, i.e., the number of BS antennas $M_\text{B}$, BS beam alignment error parameter $k_\text{B}$, frame length $T$ and average SNR, to the optimal unit training overhead $\beta_\mathcal{A}^*$.}
  \label{Optbeta}
\end{figure}

Fig. \ref{Optbeta_kBMB} shows that two parameters of BS beamforming, i.e., $M_\text{B}$ and $k_\text{B}$, as influencing factors of $\beta_\mathcal{A}^*$. When $k_\text{B}$ is quite small, e.g., $k_\text{B} \leq 0.08$, the optimal unit training overhead $\beta_\mathcal{A}^*$ first decreases, then increases and finally decreases as the number of BS antennas $M_\text{B}$ grows (verified in Fig. \ref{beta_MB}). In contrast, the optimal unit training overhead $\beta_\mathcal{A}^*$  first increases and then decreases when the number of BS antennas $M_\text{B}$ increases under a relatively large $k_\text{B}$, e.g., $0.08< k_\text{B} \leq 1$ in Fig. \ref{Optbeta_kBMB}. In summary, the optimal unit training overhead $\beta_\mathcal{A}^*$ as a function of the number of BS antennas $M_\text{B}$ shows different tendencies in the regions of high beam alignment error (approximately $0.08< k_\text{B} \leq 1$) and low beam alignment error (approximately $k_\text{B} \leq 0.08$). However, in the region of low beam alignment error, the variation of $\beta_\mathcal{A}^*$ between different $M_\text{B}$ is very small, i.e., $\beta_\mathcal{A}^*$ roughly falls in the rang of $(0.2, 1.2)$. Conversely, as $k_\text{B}$ increases, the probability of beam misalignment increases, and the variation of $\beta_\mathcal{A}^*$ between different $M_\text{B}$ becomes more drastic. 

In Fig. \ref{Optbeta_kBMB}, the red stars mark the maximum $\beta_\mathcal{A}^*$ and its corresponding $M_\text{B}$ for each $k_\text{B}$. Firstly, it is suggested that BSs with medium scale of antennas, e.g., $4\leq M_\text{B} \leq 16$, should be allocated more resources on channel estimation, no matter the degree of the beam alignment error. Since in this situation, beam misalignment dominates the performance over beamforming gain. Secondly, when the number of BS antennas is relatively large, e.g., $M_\text{B} > 16$, beamwidth is quite narrow and beam misalignment is always critical, which makes resource allocation less useful, resulting in relatively few channel estimation resources allocation. Finally, when the number of BS antennas is quite small, e.g., $M_\text{B} = 2$, channel estimation resources can be shrunk, because the particularly wide beam is insensitive to beam misalignment. From another point of view, the black diamonds mark the maximum $\beta_\mathcal{A}^*$ and the corresponding $k_\text{B}$ for each $M_\text{B}$. For most cases, e.g., $M_\text{B} \leq 48$, a larger $k_\text{B}$ corresponds to allocating more resources on channel estimation, in order to get a more accurate CSI (verified in Fig. \ref{beta_kBkU}). While when the number of BS antennas is very large, e.g., $M_\text{B} \geq 48$, a larger $k_\text{B}$ may not lead to a larger $\beta_\mathcal{A}^*$ due to the resource saving concern. Yet $\beta_\mathcal{A}^*$ for $M_\text{B} \geq 48$ is insensitive to both $M_\text{B}$ and $k_\text{B}$.

Fig. \ref{Optbeta_SNRandT} illustrates the impact of frame length $T$ and average SNR on the optimal unit training overhead $\beta_\mathcal{A}^*$. In the left figure of Fig. \ref{Optbeta_SNRandT}, the optimal unit training overhead $\beta_\mathcal{A}^*$ first increases and then decreases w.r.t average SNR. In the low SNR region, e.g., $\text{SNR}<4 \text{dB}$, the unit training overhead $\beta$ dominates the variance of beam alignment error ($\sigma_\text{B}^2 = k_\text{B} \pi^2/\sqrt{1+\beta \text{SNR}}$). Therefore, higher SNR requires a larger $\beta_\mathcal{A}^*$ in the low SNR region. On the contrary, in the high SNR region, beam alignment error is low enough and a smaller $\beta_\mathcal{A}^*$ is preferred to reserve more resources for data transmission (verified in Fig. \ref{beta_SNR}). In the right figure of Fig. \ref{Optbeta_SNRandT}, it is shown that longer frame length $T$ corresponds to more resources to be allocated and a larger $\beta_\mathcal{A}^*$ should be chosen (verified in Fig. \ref{beta_T}).

\subsection{Performance w.r.t Network Deployment}
\begin{figure}[t]
  \setlength{\abovecaptionskip}{0pt}
  \setlength{\belowcaptionskip}{-20pt}
  \centering
  \vspace{0in}
  \subfigure[Coverage probability]{\label{CP_miu_lambdaL}
    \includegraphics[width=0.475\textwidth]{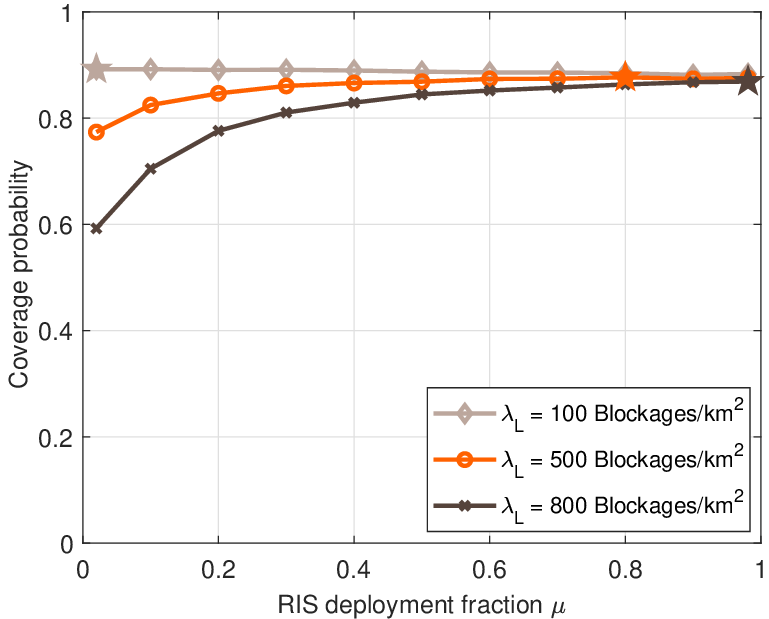}}
  \subfigure[ASE and EE]{\label{ASE_miu_lambdaL} 
    \includegraphics[width=0.485\textwidth]{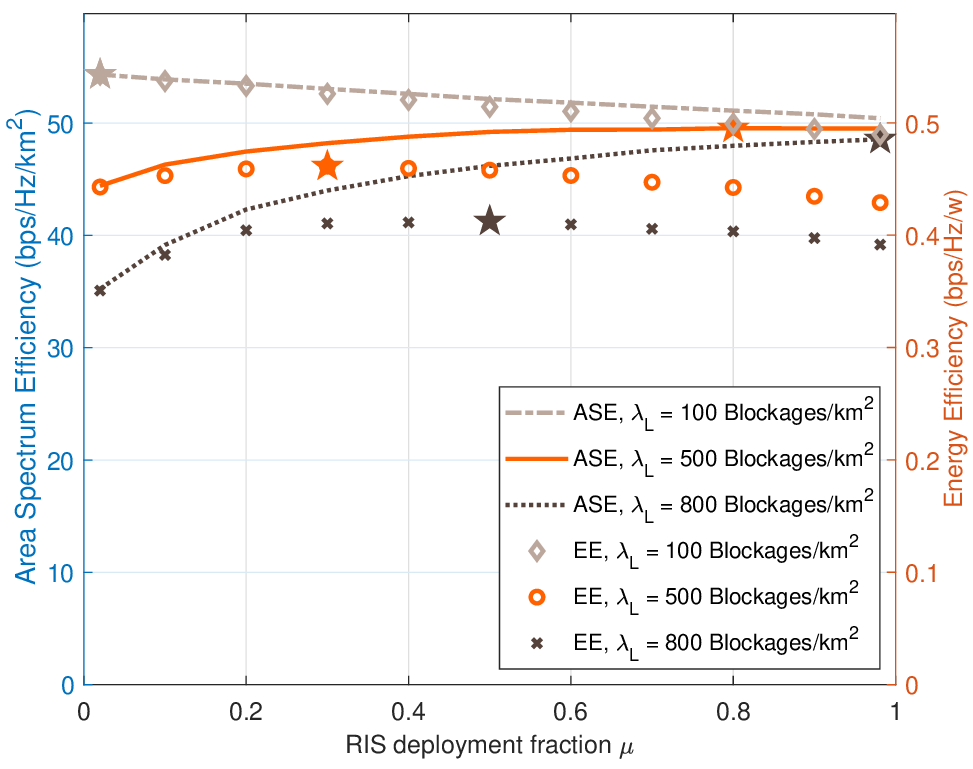}}
  \caption{Coverage probability, ASE and EE scaling with the RIS deployment fraction $\mu$ under different blockage density $\lambda_\text{L}$.}
  \label{miu_lambdaL}
\end{figure}
In this subsection, the performance metrics w.r.t the RIS deployment fraction $\mu$ are evaluated under different blockage densities $\lambda_\text{L}$ and BS densities $\lambda_\text{B}$ to study the optimal network deployment scheme. 

Fig. \ref{miu_lambdaL} shows the performance under different blockage densities $\lambda_\text{L}$. Overall, high density of blockages harms the coverage probability, the ASE and the EE. And the optimal RIS deployment fraction increases as the blockage density increases. When the blockage density is small, e.g. $\lambda_\text{L} = 100 \text{ Blockages/km}^2$, most of the UEs associate to serving BSs through direct links. Consequently, deploying more RISs provides minor gain on desired signal enhancement, but the total interference is greatly enlarged by reflection. Accordingly, the overall performance even degrades as more RISs are deployed. However, when the network has intensive blockages and most of the BS-UE links are blocked, e.g. $\lambda_\text{L} = 500 \text{ Blockages/km}^2$ and $\lambda_\text{L} = 800 \text{ Blockages/km}^2$, deploying appropriate amount of RISs enhances the performance notably as RISs provide additional communication links to NLoS UEs, and the impact of reflected interference is well compensated. Fig. \ref{ASE_miu_lambdaL} indicates that the ASE-optimal RIS deployment fraction may not equals to the EE-optimal RIS deployment fraction. And this result reveals that trade-offs between the ASE and the EE performance should be considered when designing RIS deployment schemes. In this case, the ASE-optimal RIS deployment fraction significantly exceeds the EE-optimal RIS deployment fraction. Especially, the EE performance highly depends on the ratio $P_\text{B}/P_\text{R}$, and thus we should deploy more RISs when deploying RISs is `energy-economical'.

\begin{figure}[t]
  \setlength{\abovecaptionskip}{0pt}
  \setlength{\belowcaptionskip}{-20pt}
  \centering
  \vspace{0in}
  \subfigure[Coverage probability]{\label{CP_miu_lambdaB}
    \includegraphics[width=0.473\textwidth]{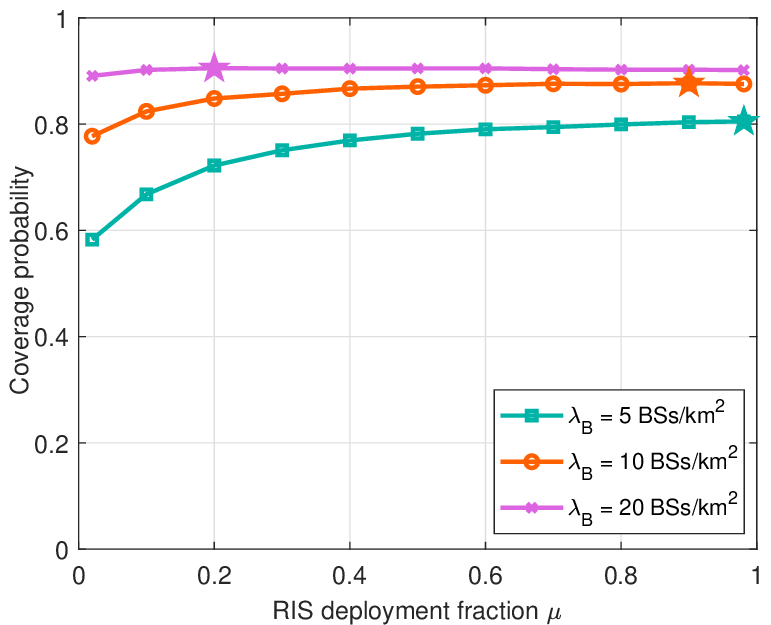}}
  \subfigure[ASE and EE]{\label{ASE_miu_lambdaB} 
    \includegraphics[width=0.485\textwidth]{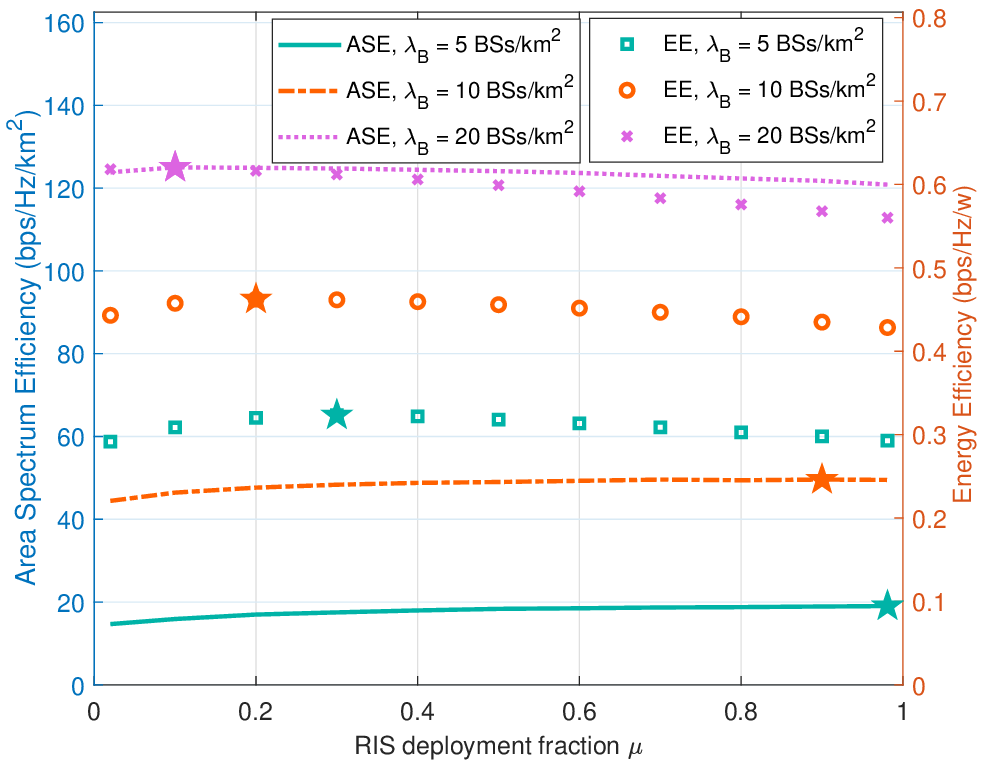}}
  \caption{Coverage probability, ASE and EE scaling with the RIS deployment fraction $\mu$ under different BS density $\lambda_\text{B}$.}
  \label{miu_lambdaB}
\end{figure}

Finally, the performance under varying BS density $\lambda_\text{B}$ is displayed in Fig. \ref{miu_lambdaB}. Deploying denser BSs enhances the coverage probability, the ASE and the EE as the average UE-BS distance is shortened. However, deploying more RISs brings performance degradation under dense BS deployment, especially for the EE in Fig. \ref{ASE_miu_lambdaB}, since BS-UE direct links dominate the association links. The optimal RIS deployment fraction is a monotonically decreasing function of the BS density. This inspires a potential way of replacing part of BSs with RISs for a lower cost, if the performance loss is within tolerance. To summarize Fig.\ref{miu_lambdaL} and \ref{miu_lambdaB}, the greater the ratio of blockage density to BS density, the better the performance of deploying RISs, and the denser RISs should be.

\section{Conclusion}\label{seccon}
This paper has investigated the coverage probability, the ASE and the EE of a downlink RIS-aided multi-cell network with directional transmissions, where the optimal resource allocation and RIS deployment for network performance have been studied. The reflection of interference by RISs, which has not received enough attention, is especially considered. A general model on resource allocation between channel estimation and data transmission is introduced, and the relationship between the beam alignment error and the channel estimation overhead is characterized. Based on these models, the analytical expressions of the coverage probability, the ASE and the EE are derived. Moreover, the monotonicity of the coverage probability w.r.t. the unit training overhead is proved. And the properties of the optimal unit training overhead w.r.t other system parameters, i.e., the number of antennas, beam alignment error parameters, frame length and average SNR, are revealed. Numerical results indicate that the reflection of interference is notable and more RISs are needed only when the ratio of blockage density to BS density is large. However, replacing part of BSs with RISs may result in performance loss, even with lower cost. In the future, we will design more practical interference cancellation and beamforming schemes to alleviate the interference brought by RISs. Methods to reduce the channel estimation overhead in the RIS-aided multi-cell networks with directional transmissions are also worth studying.

\section*{Appendix A: Proof of Theorem 1}
 For the first conditional coverage probability term $\mathbb{P}[\text{SINR}\!>\!\tau|\mathcal{D}]$ in Eq. (\ref{CP_define}), we have the following expressions
\begin{align}\label{SINRD}
  \begin{split}
    \mathbb{P}[\text{SINR}\!>\!\tau|\mathcal{D}]
   =&\mathbb{E}_{r_\text{D}}\Bigg[
   \mathbb{P}\!\left[\frac{G_0 h  P\!L_\text{D}(r_\text{D})}
   {N_0+ I_\text{D}^\text{d}+ I_\text{R}^\text{d}}>\tau
   \Bigg|r_\text{D}\right]\Bigg]\\
   \overset{\text{(a)}}=&
   \int_0^{\infty} \!\mathbb{P}\!\left[\frac{G_0 h  P\!L_\text{D}(r_\text{D})}
   {N_0 + I_\text{D}^\text{d}+ I_\text{R}^\text{d}}>\tau
   \Bigg|r_\text{D}\right] 
   \!f_\text{D}(r_\text{D}) \text{d} r_\text{D}\\
   \overset{\text{(b)}}=&
   \!\!\!\int_0^{\infty} \!\!\!
   p_{\text{E}_\text{B}}\!(\sigma_\text{B}, \theta_\text{B})
   p_{\text{E}_\text{U}}\!(\sigma_\text{U}, \theta_\text{U})
   \mathbb{P}\!\left[\frac{N_\text{B} N_\text{U} h  P\!L_\text{D}(r_\text{D})}
   {N_0 + I_\text{D}^\text{d}+ I_\text{R}^\text{d}}\!\!>\!\tau
   \Bigg|r_\text{D}\!\right] 
   \!\!f_\text{D}(r_\text{D}) \text{d} r_\text{D}\\
   \overset{\text{(c)}}=&    
   \!\!\!\int_0^{\infty}\!\!\!
   p_{\text{E}_\text{B}}\!(\sigma_\text{B}, \theta_\text{B})
   p_{\text{E}_\text{U}}\!(\sigma_\text{U}, \theta_\text{U})
   \exp( - s_1 N_0)
   \mathcal{L}_{I_\text{D}^\text{d}}(s_1)
   \mathcal{L}_{I_\text{R}^\text{d}}(s_1)
   f_\text{D}(r_\text{D}) \text{d} r_\text{D},
  \end{split}
\end{align}
where $s_1 = \frac{\tau}{N_\text{B} N_\text{U} P\!L_\text{D}(r_\text{D})}$ for simplicity, and $r_\text{D}$ is the link distance between the nearest LoS BS and the typical user, while $I_\text{D}^\text{d}$ and $I_\text{R}^\text{d}$ are the normalized direct interference and the normalized reflected interference under a direct association link, respectively. The variables $\mathcal{L}_{I_\text{D}^\text{d}}(s_1)$ and $\mathcal{L}_{I_\text{R}^\text{d}}(s_1)$ indicate the Laplace transforms of $I_\text{D}^\text{d}$ and $I_\text{R}^\text{d}$ evaluated at $s_1$, respectively. Equation (a) follows from the total probability theorem and the PDF of $r_\text{D}$ is given in Eq.(\ref{PDF_D}). Equation (b) follows from the fact that $G_0$ is i.i.d. for each link and the distribution of $G_0$ is shown in Eq.(\ref{G0}). Equation (c) follows from the exponential distribution of $h$.

Then we focus on the Laplace transform term $\mathcal{L}_{I_\text{D}^\text{d}}(s_1)$, which is expressed as
\begin{align}\label{LdD}
  \begin{split}
    \mathcal{L}_{I_\text{D}^\text{d}}(s_1)
   =& \mathbb{E}_{I_\text{D}^{\text{d}}}\!\left[
   \exp\!\left(-s_1 I_\text{D}^{\text{d}}\right)\right]\\
   \overset{\text{(a)}}=& 
   \mathbb{E}_{\phi_\text{B}}\!\!\left[
   \prod_{i:B_i \in \phi_\text{B}\!\backslash\{\!B^*\!\}} 
   \!\!\!\!\!\mathbb{E}_{G_i,h,\mathbb{I}_\text{L}}
   \!\Big[\!\exp\!\left(-s_1 
   \mathbb{I}_\text{L}(r_{B_i})
   P\!L_\text{D}(r_{B_i})
   G_i h \right)\!\Big]\!\right]\\
   \overset{\text{(b)}}=& 
   \mathbb{E}_{\phi_\text{B}}\!\Bigg[
   \!\prod_{i:B_i \in \phi_\text{B}\!\backslash\{\!B^*\!\}}
   \!\!\!\!\!p_\text{L}(r_{B_i})\mathbb{E}_{G_i,h}
   \Big[
   \!\exp(-s_1 
   P\!L_\text{D}(r_{B_i})
   G_i h) 
   \Big] 
   + 1\!-\! p_\text{L}(r_{B_i})\Bigg]\\
   \overset{\text{(c)}}=&
   \mathbb{E}_{\phi_\text{B}}\!\Bigg[
   \!\prod_{i:B_i \in \phi_\text{B}\!\backslash\{\!B^*\!\}}
   \!\!\!\frac{\theta_\text{B}\theta_\text{U}}{4\pi^2}
   p_\text{L}(r_{B_i})
   \mathbb{E}_{h}\Big[
   \!\exp \!\left(\!-s_1
   P\!L_\text{D\!}(r_{B_i})
   N_\text{\!B} N_\text{\!U} h\right)\!\Big]
   + 1- \frac{\theta_\text{B}\theta_\text{U}}{4\pi^2}
   p_\text{L}(r_{B_i}) 
   \!\Bigg]\\
   \overset{\text{(d)}}=&
   \mathbb{E}_{\phi_\text{B}}\!\!\left[
   \!\prod_{i:B_i \in \phi_\text{B}\!\backslash\{\!B^*\!\}}
   \left(1-
   \frac{\theta_\text{B}\theta_\text{U}}{4\pi^2}
   \frac{s_1 p_\text{L\!}(r_{B_i}) P\!L_\text{D\!}(r_{B_i}) N_\text{\!B} N_\text{\!U}}
   {1+s_1 P\!L_\text{D\!}(r_{B_i}) N_\text{\!B} N_\text{\!U}}
   \right)\right]\\
   \overset{\text{(e)}}=&
   \exp \!\left(
   \!-\frac{\lambda_\text{B}\theta_\text{B}\theta_\text{U}}{2\pi} 
   \!\!\int_{r_\text{D}}^{\infty}
   \!\frac{s_1 p_\text{L\!}(u) P\!L_\text{D\!}(u) N_\text{\!B} N_\text{\!U}}
   {1+s_1 P\!L_\text{D\!}(u) N_\text{\!B} N_\text{\!U}}
   u \text{d} u
   \!\right),
  \end{split}
\end{align}
where (a) holds because $G_i,h$ and $\mathbb{I}_\text{L}(r)$ are i.i.d. for each link.  Equations (b) and (c) follow by the distributions of $\mathbb{I}_\text{L}(r)$ and $G_i$ given in Eq. (\ref{IL}) and Eq. (\ref{Gi}). Equation (d) follows from the exponential distribution of $h$ and (e) follows from the PGFL of PPP with variables substitution. 

The Laplace transform term $\mathcal{L}_{I_\text{R}^\text{d}}(s_1)$ can be derived with a similar process and we stretch the major steps
\begin{align}\label{LdR}
  \begin{split}
    &\mathcal{L}_{I_\text{R}^\text{d}}(s_1)\\
   =& \mathbb{E}_{\widehat\phi_\text{B},\widehat\phi_\text{R},G_i,h,\mathbb{I}_\text{L},\mathbb{I}_\text{F}}
   \!\Bigg[
   \exp\!\Bigg(\!\!\!-\!s_1 \!\!\!
   \sum_{i:B_i \in \widehat\phi_\text{B}}
   \sum_{k:R_k \in \widehat\phi_\text{R}}
   \!\!\!\mathbb{I}_\text{L}(\!r_{B_i\!R_k}\!)
   \mathbb{I}_\text{L}(\!r_{R_k}\!)
   \mathbb{I}_\text{F}(\!r_{B_i},r_{R_k},\angle{B_ioR_k}\!)   
   P\!L_\text{R}(\!r_{B_i\!R_k}\!\!+\!r_{R_k}\!)
   G_i h \!\Bigg)\!\Bigg]\\
   =&\mathbb{E}_{\widehat\phi_\text{B},\widehat\phi_\text{R}}\!\Bigg[
   \prod_{i:B_i \in \widehat\phi_\text{B}}
   \prod_{k:R_k \in \widehat\phi_\text{R}}
   \!\!\!\Bigg(\!\!1 \!-\!
   \frac{\theta_\text{B}\theta_\text{U}}{4\pi^2}
   \frac{s_1 
   P\!L_\text{R\!}(\!r_{B_i\!R_k}\!\!+\!r_{R_k}\!) N_\text{\!B} N_\text{\!U}}
   {1\!+\!s_1 \!P\!L_\text{R\!}(\!r_{B_i\!R_k}\!\!+\!r_{R_k}\!) N_\text{\!B} N_\text{\!U}}
   p_\text{L\!}(\!r_{B_i\!R_k}\!) p_\text{L\!}(\!r_{R_k}\!) 
   p_\text{F}(r_{B_i},r_{R_k},\angle{B_ioR_k})\!\Bigg)\!\!\Bigg]\\
   \overset{\text{(a)}}=&
   \exp \!\Bigg(
   \!\!-\!2 \pi \!\lambda_\text{B}
   \!\!\int_0^\infty
   \!\!\Bigg(\!1\!-\! \exp\!\Big(
   \!\!-\!\!  \frac{\lambda_\text{R}\theta_\text{B}\theta_\text{U}}{4\pi^2}
   \!\!\int_{\mathcal{C}_2}
   \frac{s_1 
   P\!L_\text{R\!}\!\left(\!\!\sqrt{\!u^2 \!+\! t^2 \!-\! 2 u t \cos \psi}\!+\!t \!\right) 
   \!N_\text{\!B} N_\text{\!U}}
   {1\!+\!s_1 P\!L_\text{R\!}\!\left(\!\!\sqrt{\!u^2 \!+\! t^2 \!-\! 2 u t \cos \psi}\!+\!t \!\right) 
   \!N_\text{\!B} N_\text{\!U}}\cdot \\
   & \quad \quad p_\text{L\!}\!\left(\!\!\sqrt{\!u^2 \!+\! t^2 \!-\! 2 u t \cos \psi}\right) 
   \!p_\text{L\!}(t) 
   p_\text{F}(u,t,\psi)
   t \text{d} t 
   \text{d} \psi
   \!\Big)\!\!\Bigg)
   u \text{d} u
   \!\!\Bigg),
  \end{split}
\end{align}
where in (a), the BSs and the RISs in $\widehat\phi_\text{B}$ and $\widehat\phi_\text{R}$ are located in the region $\mathcal{C}_2 = \{t, \psi : r_\text{D}^{-\alpha} \geq \gamma (\!\sqrt{\!u^2 \!+\! t^2 \!-\! 2 u t \cos \psi} + t )^{-\alpha}\}$, indicating that interfering links should have weaker average received signal power than the association link. The region $\mathcal{C}_2$ can be rewritten as the union of two sub-regions in Theorem \ref{thm1}. As a consequence, the full expression of $\mathbb{P}[\text{SINR}\!>\!\tau|\mathcal{D}]$ is obtained by combining Eq. (\ref{SINRD}) with Eq. (\ref{PE}), (\ref{PDF_D}), (\ref{LdD}) and (\ref{LdR}).

For the second conditional coverage probability term $\mathbb{P}[\text{SINR}\!>\!\tau|\mathcal{R}]$, we have
\begin{align}\label{SINRR}
  \begin{split}
    \mathbb{P}[\text{SINR}\!>\!\tau|\mathcal{R}] 
   =&\mathbb{E}_{r_\text{R}}\Bigg[
   \mathbb{P}\!\left[\frac{G_0 h  P\!L_\text{R}(r_\text{R})}
   {N_0 + I_\text{D}^\text{r}+ I_\text{R}^\text{r}}>\tau
   \Bigg|r_\text{R}\right]\Bigg]\\
   =&\!\!\!\int_0^{\infty}\!\!\!
   p_{\text{E}_\text{B}}\!(\sigma_\text{B}, \theta_\text{B})
   p_{\text{E}_\text{U}}\!(\sigma_\text{U}, \theta_\text{U})
   \exp( - s_2 N_0)
   \mathcal{L}_{I_\text{D\!}^\text{r}}(s_2)
   \mathcal{L}_{I_\text{R\!}^\text{r}}(s_2)
   f_\text{R}(r_\text{R}) \text{d} r_\text{R},
  \end{split}
\end{align}
where $s_2 = \frac{\tau}{N_\text{B} N_\text{U} P\!L_\text{R}(r_\text{R})}$, while $\mathcal{L}_{I_\text{D}^\text{r}}(s_2)$ and $\mathcal{L}_{I_\text{R}^\text{r}}(s_2)$ indicate the Laplace transforms of $I_\text{D}^\text{r}$ and $I_\text{R}^\text{r}$ evaluated at $s_2$, respectively. For the limited pages, we neglect the derivation of $\mathcal{L}_{I_\text{D}^\text{r}}(s_2)$ and $\mathcal{L}_{I_\text{R}^\text{r}}(s_2)$. The results are
\begin{align}\label{LrD}
  \begin{split}
    \mathcal{L}_{I_\text{D}^\text{r}}(s_2) 
   =&\exp \!\left(
   \!\!-\frac{\lambda_\text{B} \theta_\text{B}\theta_\text{U}}{2\pi}
   \!\!\int_{r_\text{R} \gamma^{-\frac{1}{\alpha}}}^{\infty}
   \!\frac{s_2  p_\text{L\!}(u)  P\!L_\text{D\!}(u) N_\text{B} N_\text{U}}
   {1 + s_2  P\!L_\text{D\!}(u)  N_\text{B} N_\text{U}}
   u \text{d} u
   \!\!\right),
  \end{split}
\end{align}
\begin{align}\label{LrR}
  \begin{split}
    \mathcal{L}_{I_\text{R}^\text{r}}(s_2)
   =&\exp \!\Bigg(
   \!\!-\!2 \pi \!\lambda_\text{B}
   \!\!\int_0^\infty
   \!\!\Bigg(\!1\!-\! \exp\!\Big(
   \!\!-\!\!  \frac{\lambda_\text{R}\theta_\text{B}\theta_\text{U}}{4\pi^2}
   \!\!\int_{\mathcal{C}_3}
   \frac{s_2 
   P\!L_\text{R\!}\!\left(\!\!\sqrt{\!u^2 \!+\! t^2 \!-\! 2 u t \cos \psi}\!+\!t \!\right) 
   \!N_\text{\!B} N_\text{\!U}}
   {1\!+\!s_2 P\!L_\text{R\!}\!\left(\!\!\sqrt{\!u^2 \!+\! t^2 \!-\! 2 u t \cos \psi}\!+\!t \!\right) 
   \!N_\text{\!B} N_\text{\!U}}\cdot \\
   & \quad \quad p_\text{L\!}\!\left(\!\!\sqrt{\!u^2 \!+\! t^2 \!-\! 2 u t \cos \psi}\right) 
   \!p_\text{L\!}(t) 
   p_\text{F}(u,t,\psi)
   t \text{d} t 
   \text{d} \psi
   \!\Big)\!\!\Bigg)
   u \text{d} u
   \!\!\Bigg),
  \end{split}
\end{align}
where $\mathcal{C}_3 = \{t,\psi: \sqrt{\!u^2 \!+\! t^2 \!-\! 2 u t \cos \psi} + t > r_\text{R}\}$ based on the same thought of $\mathcal{C}_2$. And $\mathcal{C}_3$ can also be rewritten as the union of two sub-regions. The full expression of $\mathbb{P}[\text{SINR}\!>\!\tau|\mathcal{R}]$ is obtained by combining Eq. (\ref{SINRR}) with (\ref{PE}), (\ref{PDF_R}), (\ref{LrD}) and (\ref{LrR}). Finally, the expression of $\mathcal{P}$ in Theorem \ref{thm1} is obtained with variables substitution.

\section*{Appendix B: Proof of Corollary 1}
Since the unit training overhead $\beta$ affects the coverage probability in $p_{\text{E}_\text{B}}\!(\sigma_\text{B}, \!\theta_\text{B}\!)p_{\text{E}_\text{U}}\!(\sigma_\text{U}, \!\theta_\text{U}\!)$, which is a multiplier of $\mathcal{P}$ in Theorem \ref{thm1}. We have following equations
\begin{align}
 \begin{split}
    \frac{\text{d}\mathcal{P}}{\text{d} \beta}
  =& K \frac{\text{d}(p_{\text{E}_\text{B}}(\sigma_\text{B}, \theta_\text{B}) p_{\text{E}_\text{U}}(\sigma_\text{U}, \theta_\text{U}))}{\text{d} \beta}\\
  \overset{(\text{a})}=& K \frac{\text{d}
     \left(\frac
     {\text{erf}\left(
     \frac{\theta_\text{B}}{2\pi} \frac{\sqrt{1+\beta \text{SNR}}}{\sqrt{2k_\text{B}}}\right)
     \text{erf}\left(
     \frac{\theta_\text{U}}{2\pi} \frac{\sqrt{1+\beta \text{SNR}}}{\sqrt{2k_\text{U}}}\right)}
     {\text{erf}\left(\frac{\sqrt{1+\beta \text{SNR}}}{\sqrt{2k_\text{B}}}\right)
     \text{erf}\left(\frac{\sqrt{1+\beta \text{SNR}}}{\sqrt{2k_\text{U}}}\right)
     }\right)}
     {\text{d} \beta}\\
  \overset{(\text{b})}=& K \frac{\text{d}
     \left(\frac
     {\text{erf}\left(a f(\beta)\right)
     \text{erf}\left(b f(\beta)\right)}
     {\text{erf}\left(c f(\beta)\right)
     \text{erf}\left(d f(\beta)\right)
     }\right)}
     {\text{d} f(\beta)}
     \frac{\text{d} f(\beta)}{\text{d} \beta}\\    
  \overset{(\text{c})}=& \frac{K \text{SNR}}{2 \sqrt{1+\beta\text{SNR}}} 
     \frac{\text{d}
     \left(\frac
     {\text{erf}\left(a f(\beta)\right)
     \text{erf}\left(b f(\beta)\right)}
     {\text{erf}\left(c f(\beta)\right)
     \text{erf}\left(d f(\beta)\right)
     }\right)}
     {\text{d} f(\beta)},
 \end{split}
\end{align}
where $K$ is a positive constant that represents the multiplier in the coverage probability $\mathcal{P}$ besides the term $p_{\text{E}_\text{B}}(\sigma_\text{B}, \theta_\text{B}) p_{\text{E}_\text{U}}(\sigma_\text{U}, \theta_\text{U})$. Equation (a) follows from the definitions of $\sigma_B^2$ and $\sigma_U^2$ in Section \ref{framestruc}. In (b), we use $f(\beta) = \sqrt{1+\beta \text{SNR}}$, $a= \frac{\theta_\text{B}}{2\pi \sqrt{2k_\text{B}}}$, $b= \frac{\theta_\text{U}}{2\pi \sqrt{2k_\text{U}}}$, $c= \frac{1}{\sqrt{2k_\text{B}}}$ and $d= \frac{1}{\sqrt{2k_\text{U}}}$ for simplicity. Note that, the scope of parameters and variables are: $a\in(0,c]$, $b\in(0,d]$, $c \in (0, +\infty)$, $d \in (0, +\infty)$, $\text{SNR} \in (0, +\infty)$, $\beta \in [0,\frac{T}{M_\text{B}M_\text{R}M_\text{U}+M_\text{B}M_\text{U}})$ and $f(\beta) \in [1,+\infty)$. 

Since the first term in (c) is positive, i.e., $\frac{K \text{SNR}}{2 \sqrt{1+\beta \text{SNR}}} >0$, we focus on the second term
\begin{align}\label{betaCP}
 \begin{split}
     &\frac{\text{d}
     \left(\frac
     {\text{erf}\left(a f(\beta)\right)
     \text{erf}\left(b f(\beta)\right)}
     {\text{erf}\left(c f(\beta)\right)
     \text{erf}\left(d f(\beta)\right)
     }\right)}
     {\text{d} f(\beta)}\\
     =& \frac{2 \text{erf}(a f(\beta)\!) \text{erf}(b f(\beta)\!)}
        {\sqrt{\pi}  f(\beta) \text{erf}(c f(\beta)\!) \text{erf}(d f(\beta)\!)}
        \!\!\left(\!
       \frac{a f\!(\beta) e^{-a^2 \!f^2\!(\beta)}} {\text{erf}(a f\!(\beta)\!)} 
       \!-\!
       \frac{c f\!(\beta) e^{-c^2 \!f^2\!(\beta)}} {\text{erf}(c f\!(\beta)\!)} 
       \!+\!
       \frac{b f\!(\beta) e^{-b^2 \!f^2\!(\beta)}} {\text{erf}(b f\!(\beta)\!)}
       \!-\!
       \frac{d f\!(\beta) e^{-d^2 \!f^2\!(\beta)}} {\text{erf}(d f\!(\beta)\!)}
       \!\right).
  \end{split}
\end{align}
Obviously, the first term outside the brackets is positive. And each term in the formula inside the brackets has the form of $g(x) = x e^{-x^2}/\text{erf}(x)$ with $x>0$. Because $g(x)$ is a monotonically decreasing function of $x$, then $g(a f(\beta))- g(c f(\beta)) \geq 0$ and $g(b f(\beta))- g(d f(\beta)) \geq 0$ hold. Therefore, we have proved that $\frac{\text{d}\mathcal{P}}{\text{d} \beta} >0$ and $\mathcal{P}$ is a monotonically increasing function of $\beta$.

\section*{Appendix C: Proof of Corollary 2}
Since the unit training overhead $\beta$ only influences the ASE on its multiplier term
\begin{align}
 \frac{T\!-\!\beta(M_\text{B}M_\text{R}M_\text{U}\!+\!M_\text{B}M_\text{U})}{T} p_{\text{E}_\text{B}}(\sigma_\text{B}, \theta_\text{B}) p_{\text{E}_\text{U}}(\sigma_\text{U}, \theta_\text{U}). 
\end{align}
 Based on similar thoughts of the proof for Corollary 1, letting $\frac{\text{d}\mathcal{A}}{\text{d} \beta} = 0$ leads to
\begin{align}
 \begin{split}
 &\frac
 {\text{d}\left(
 \left(\!\frac{T}{M_\text{B}M_\text{R}M_\text{U}\!+\!M_\text{B}M_\text{U}}\!-\!\beta\!\right)\!
 p_{\text{E}_\text{B}}(\sigma_\text{B}, \theta_\text{B}) 
 p_{\text{E}_\text{U}}(\sigma_\text{U}, \theta_\text{U})\right)}
 {\text{d} \beta} = 0,
  \end{split}
\end{align}
and thus the final expression in Corollary \ref{col2} holds. 

\bibliographystyle{IEEEtran}
\bibliography{bibfile}
\end{document}